\spnewtheorem{domi_rule}[exercise]{Corollary}{\bfseries}{\itshape}
\spnewtheorem{condition}[problem]{Condition}{\bfseries}{\itshape}
\pgfplotsset{compat=1.14}
\algrenewcommand\algorithmicrequire{\textbf{Input:}}   
\algrenewcommand\algorithmicensure{\textbf{Output:}}   
\newcolumntype{Y}{>{\centering\arraybackslash}X}
\newcommand{\bigO}{\mathcal{O}}
\begin{document}

\title{A Fast Exact Algorithm for Airplane Refueling Problem \thanks{This work is supported by Key Laboratory of Management, Decision and Information Systems, CAS.}}
%
%
\author{Jianshu Li \inst{1, 2 (}\Envelope\inst{)}  \and  
Xiaoyin Hu\inst{1, 2} \and
Junjie Luo\inst{1, 2} \and
Jinchuan Cui \inst{1}}
\authorrunning{J. Li et al.}
%
\institute{Academy of Mathematics and Systems Science Chinese Academy of Sciences, Beijing 100190, China\\
\email{\{ljs, hxy, luojunjie, cjc\}@amss.ac.cn} \and
School of Mathematical Sciences, University of Chinese Academy of Sciences, Beijing 100049, China}
\maketitle              
\begin{abstract}
We consider the airplane refueling problem,
where we have a fleet of airplanes that can refuel each other.
Each airplane is characterized by specific fuel tank volume and fuel consumption rate,
and the goal is to find a drop out order of airplanes 
that last airplane in the air can reach as far as possible.
This problem is equivalent to the scheduling problem $1||\sum w_j (- \frac{1}{C_j})$.
Based on the dominance properties among jobs,
we reveal some structural properties of the problem and
propose a recursive algorithm to solve the problem exactly.
The running time of our algorithm is directly related to the number of schedules that do not violate the dominance properties.
An experimental study shows our algorithm outperforms state of the art exact algorithms and
is efficient on larger instances.
\keywords{Scheduling  \and Dominance properties \and Branch and bound.}
\end{abstract}

\section{Introduction} 
\label{sec:introduction}

The \emph{airplane refueling problem}, 
originally introduced by Gamow and Stern \cite{puzzle_math_1958},
is a special case of single machine scheduling problem.
Consider a fleet of several airplanes with certain fuel tank volume and fuel consumption rate. 
Suppose all airplanes travel at the same speed and they can refuel each other in the air.
An airplane will drop out of the fleet if it has given out all its fuel to other airplanes.
The goal is to find a drop out order so that the last airplane can reach as far as possible.

\subsubsection{Problem definition} 
\label{ssub:problem_definition}
In the original description of airplane refueling problem, 
all airplanes are defaulted to be identical.
Woeginger \cite{chrobaketal:DSP:2010:2536} generalized this problem that 
each airplane $j$ can have arbitrary tank volume $w_j$ and consumption rate $p_j$.
Denote the set of all airplanes by $J$,
a solution is a drop out order $\sigma: \{1, 2, \dots, |J|\} \mapsto J$, 
where $\sigma(i) = j$ if airplane $j$ is the $i^{\text{\tiny th}}$ airplane to leave the fleet.
As a result, the objective value of the drop out order $\sigma$ is:

\[
	\sum_{j = 1}^{n} \Big( {w_{\sigma(j)}} 
	\text{\bfseries \large{/}} \text{\tiny \ }
	{\sum_{k = j}^{n}p_{\sigma(k)}} \Big).
\]

As pointed out by Vásquez \cite{vasquez_for_2015}, we can rephrase the problem as a single machine scheduling problem, which is equivalent to finding a permutation $\pi$(the reverse of $\sigma$) that minimizes:

\[
	\sum_{j = 1}^{n} \Big( - {w_{\pi(j)}} 
	\text{\bfseries \large{/}} \text{\tiny \ }
	{\sum_{k = j}^{n}p_{\pi(k)}} \Big) = \sum_{j=1}^{n} -w_j / C_j,
\]
where $C_j$ is the completion time of job $j$,
and $p_j, w_j$ correspond to its processing time and weight, respectively.
This scheduling problem is specified as $1||\sum w_j (- \frac{1}{C_j})$ using the classification scheme introduced by Graham et al. \cite{graham_optimization_1979}.

In the rest of this paper, we study the equivalent scheduling problem instead.



\subsubsection{Dominance properties} 
\label{ssub:dominance_relation}
Since the computational complexity status of airplane scheduling problem is still open \cite{chrobaketal:DSP:2010:2536},
existing algorithms find the optimal solution with branch and bound method.
While making branching decisions in a branch and bound search, 
it would be much more useful if we know the dominance relation among jobs.
For example,
if we know job $i$ always precedes job $j$ in an optimal solution,
we can speed up the searching process by pruning all the branches with job $i$ processed after job $j$.
Let the start time of job $j$ be $t_j$, 
we refer to \cite{durr_order_2014} for the definition of \emph{local dominance} and \emph{global dominance} :

\begin{itemize}
	\item \textit{local dominance:}
	Suppose job $j$ starts at time $t$ and is followed directly by job $i$ in a schedule. 
	If exchanging the positions of jobs $i,j$ strictly improves the cost,
	we say that \emph{job $i$ locally dominants job $j$ at time $t$} and denote this property by $i \prec_{l(t)} j$.
	If $i \prec_{l(t)} j$ holds for all $t \in [a, b]$, we denote it by $i \prec_{l[a, b]} j$.

	\item \textit{global dominance:}
	Suppose schedule $S$ satisfies $a \leq t_j \leq t_i - p_j \leq b$.
	If exchanging the positions of jobs $i,j$ strictly improves the cost,
	we say that \emph{job $i$  globally dominants job $i$ in time interval $[a, b]$} 
	and denote this property by $i \prec_{g[a, b]} j$.
	If it holds that $i \prec_{g[0, \infty)} j$, we denote this property by $i \prec_{g} j$.
\end{itemize}

We call the schedule that do not violate the dominance properties as \emph{potential schedule}.
The effect of dominance properties is to narrow the search space to the set of all potential schedules,
whose cardinality is much smaller than $n!$, 
the number of all job permutations.


\subsubsection{Related work} 
\label{ssub:related_work}
Airplane refueling problem is a special case of a more general scheduling problem $1||\sum w_j C_j^\beta$.
For most problems of this form, 
including airplane refueling problem,
no polynomial algorithm has been found.
Existing methods resort to approximation algorithms or branch and bound schemes.

For approximations,
several constant factor approximations and polynomial time approximation scheme(PTAS) 
have been devised for different cost functions \cite{bansal_geometry_2014,cheung_primal-dual_2017,hohn_how_2018,megow_dual_2013}.
Recently, Gamzu and Segev\cite{gamzu_polynomial-time_2019} gave the first polynomial-time approximation scheme for airplane refueling problem.

The focus of exact methods is to find stronger dominance properties.
Following a long series of improvements \cite{bagga_node_1980,croce_minimizing_1993,bader_experimental_2012,mondal_improved_2000-1,sen_minimizing_1990},
Dürr and Vásquez \cite{durr_order_2014} conjectured that 
for all cost functions of the form $f_j(t) = w_jt^\beta, \beta >0$ and all jobs $i, j$, 
$i \prec_l j$ implies $i \prec_g j$.
Latter, Bansal et al. \cite{bansal_localglobal_2017} confirmed this conjecture,
and they also gave a counter example of the generalized conjecture that 
$i \prec_{l[a, b]} j$ implies $i \prec_{g[a, b]} j$.
For airplane refueling problem, 
Vásquez \cite{vasquez_for_2015} proved that $i \prec_{l[a, b]} j$ implies $i \prec_{g[a, b]} j$.
The establish dominance properties are commonly incorporated into a branch and bound scheme, such as Algorithm $A^*$ \cite{hart_formal_1968}, to speed up the searching process.

\subsubsection{Our contribution} 
\label{ssub:our_contribution}
Existing branch and bound algorithms search for potential schedules in a trail and error manner.
Specifically, when making branching decisions,
it is unknown whether current branch contains any potential schedule
unless we exhaust the entire branch.
So if we can prune all the branches that do not contain potential schedule,
it will considerably improve the efficiency of the searching process.

In this paper we give an exact algorithm with the merit above for airplane refueling problem.
Specifically, 
every branch explored by our algorithm is guaranteed to contain at least one potential schedule,
and the time to find each potential schedule is bounded by $\bigO(n^2)$.
Numerical experiments exhibit empirical advantage of our algorithm over the $A^*$ algorithms proposed by former studies,  
and the advantage is more significant on instances with more jobs.

The main difference between previous methods and our algorithm is that
instead of branching on possible succeeding jobs,
we branch on the possible start times of a certain job.
To this end,
we introduce a prefixing technique to determine the possible start times of a certain job in a potential schedule.
In addition, 
the relative order of other jobs regarding that certain job is also decided.
Thus, each branch divides the original problem into subproblems with fewer jobs
and we can solve the problem recursively.

\subsubsection{Organization} 
\label{ssub:organization}
The rest of the paper is organized as follows.
In section 2,  we introduce a new auxiliary function and give a concise representation of dominance property.
Section 3 establishes some useful lemmas.
We present our algorithm in section 4 and experimental results in section 5.
Finally, section 6 concludes the paper.

%
%

\section{Preliminaries} 
\label{sec:preliminaries}

A \emph{dominance rule} stipulates the local and global dominance properties among jobs.
We can refer to a dominance rule by the set $R:= \cup_{i, j \in J} \{r_{ij}\}$,
where $r_{ij}$ represents the dominance properties between jobs $i, j$ specified by the rule.
Given a dominance rule $R$, we formally define the potential schedule as follows.

\begin{definition}[Potential Schedule]
	We call $S$ a potential schedule with respect to dominance rule R
	if for all $r_{ij} \in R$,
	start times of jobs $i,j$ in $S$ do not violate relation $r_{ij}$.
\end{definition}


\subsubsection{Auxiliary function} 
\label{ssub:auxiliary_function}
For each job $j$, we introduce an auxiliary function $\varphi_j(t)$,
where $t \geq 0$ represents the possible start time:
\[
\varphi_j(t) = \frac{w_j}{p_j(p_j + t)}.
\]

We remark that $\varphi_j(t)$ is well defined since the processing time $p_j$ is positive.
With the help of this auxiliary function, 
we can obtain a concise characterization of local precedence between two jobs.

\begin{lemma}
\label{lemma: local_new}
	For any two jobs $i, j$ and time $t \geq 0$, $i \prec_{l(t)} j$ if and only if $\varphi_{i}(t) > \varphi_{j}(t)$.
\end{lemma}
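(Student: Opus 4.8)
The plan is to compute directly the change in objective value caused by swapping the two adjacent jobs and to show that this change has the same sign as $\varphi_i(t)-\varphi_j(t)$. First I would make the (crucial but easy) locality observation: if job $j$ starts at time $t$ and is immediately followed by job $i$, then swapping $i$ and $j$ affects only the two summands of $\sum_k (-w_k/C_k)$ indexed by $i$ and $j$. Indeed, every job scheduled before $j$ keeps its completion time, and because the block $\{i,j\}$ occupies the interval $[t,\,t+p_i+p_j]$ in either order, every job scheduled after the block keeps its completion time too. Hence it suffices to compare
\[
	\operatorname{cost}(j,i) = -\frac{w_j}{t+p_j} - \frac{w_i}{t+p_i+p_j}
	\qquad\text{and}\qquad
	\operatorname{cost}(i,j) = -\frac{w_i}{t+p_i} - \frac{w_j}{t+p_i+p_j},
\]
all denominators being strictly positive since $t\ge 0$ and $p_i,p_j>0$.

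Next I would form the difference $\Delta := \operatorname{cost}(j,i) - \operatorname{cost}(i,j)$ and collect the $w_i$- and $w_j$-terms separately; each group telescopes, using $\frac{1}{t+p_i}-\frac{1}{t+p_i+p_j} = \frac{p_j}{(t+p_i)(t+p_i+p_j)}$ and $\frac{1}{t+p_i+p_j}-\frac{1}{t+p_j} = \frac{-p_i}{(t+p_j)(t+p_i+p_j)}$. This yields
\[
	\Delta = \frac{1}{t+p_i+p_j}\left( \frac{w_i p_j}{t+p_i} - \frac{w_j p_i}{t+p_j} \right)
	= \frac{p_i p_j}{t+p_i+p_j}\bigl( \varphi_i(t) - \varphi_j(t) \bigr),
\]
where the last equality just rewrites $\frac{w_i p_j}{t+p_i} = p_i p_j\,\varphi_i(t)$ and similarly for $j$. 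Since $\frac{p_i p_j}{t+p_i+p_j}>0$, we get $\Delta>0$ if and only if $\varphi_i(t)>\varphi_j(t)$.

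Finally I would translate this back into the dominance language: by definition $i \prec_{l(t)} j$ means exchanging $i$ and $j$ strictly improves the cost, i.e. strictly decreases it since the objective is being minimized, which is exactly $\operatorname{cost}(i,j) < \operatorname{cost}(j,i)$, i.e. $\Delta>0$. Combined with the displayed identity this gives $i \prec_{l(t)} j$ if and only if $\varphi_i(t) > \varphi_j(t)$, as claimed. I do not expect a real obstacle here — the argument is a short computation; the only points needing care are the locality observation in the first step (which is what makes the swap a purely local comparison) and the sign convention that an "improvement" means a decrease of the cost for this minimization objective.
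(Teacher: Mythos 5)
Your proposal is correct and follows essentially the same route as the paper: compute the cost difference between the two orderings of the adjacent pair, telescope the $w_i$- and $w_j$-terms, and factor out the positive quantity $\frac{p_ip_j}{t+p_i+p_j}$ so that the sign of the swap is governed by $\varphi_i(t)-\varphi_j(t)$. The only (harmless) difference is that you work with the minimization form $\sum(-w_k/C_k)$ and make the locality observation explicit, whereas the paper computes with the positive objective $F$ and leaves that observation implicit.
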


\begin{proof}
	Suppose $S = AijB$ is an arbitrary schedule, where job $i$ starts at time $t$ and job $j$ is processed directly after job $i$.
	We use $F(S)$ to represent the cost of the schedule $S$.
	To explore the local dominance relation between job $i$ and job $j$, 
	we need to check the impact of swapping the order of $i, j$,
	which leads to:
	\begin{align*}
		F(AijB) - F(AjiB) & = \frac{w_i}{p_i + t} + \frac{w_j}{p_j + p_i + t} 
		                      - \frac{w_j}{p_j + t} - \frac{w_i}{p_i + p_j + t} \\
						  & = w_i \cdot \frac{p_j}{(p_i + t)(p_i + p_j + t)} -    w_j \cdot \frac{p_i}{(p_j + t)(p_i + p_j + t)} \\
						  & = \frac{p_ip_j}{p_i + p_j + t}(\frac{w_i}{p_i(p_i + t)} - \frac{w_j}{p_j(p_j + t)}) \\
						  & = \frac{p_ip_j}{p_i + p_j + t}(\varphi_i(t) - \varphi_j(t)).
	\end{align*}
	Since $\frac{p_ip_j}{p_i + p_j + t} > 0$, the effect of exchanging jobs $i, j$ depends on the term $(\varphi_i(t) - \varphi_j(t))$,
	which completes the proof.
	\qed
\end{proof}

%

\subsubsection{Dominance rule} 
\label{ssub:modified_rule}

Vásquez \cite{vasquez_for_2015} proved the following dominance properties of airplane refueling problem.

\begin{theorem}[Vásquez]
	\label{theorem: vas}
	For all jobs $i, j$ and time points $a, b$, the dominance property $i \prec_{l[a, b]} j$ implies $i \prec_{g[a, b]} j$.
\end{theorem}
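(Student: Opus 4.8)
The plan is to reduce the statement to a single exchange and then to use a feature of the cost $-1/C$ that is absent for other powers. By Lemma~\ref{lemma: local_new}, the hypothesis $i\prec_{l[a,b]}j$ is exactly ``$\varphi_i(t)>\varphi_j(t)$ for all $t\in[a,b]$''; and since $1/\varphi_j(t)=\tfrac{p_j}{w_j}\,t+\tfrac{p_j^{2}}{w_j}$ is affine in $t$, this holds on all of $[a,b]$ iff it holds at the two endpoints $a,b$ — a linearity peculiar to the cost $1/C$. Now fix any schedule $S$ in which $j$ precedes $i$ with $a\le t_j\le t_i-p_j\le b$, write $S=A\,j\,X\,i\,B$ with $X$ the block between $j$ and $i$ and $P:=\sum_{k\in X}p_k$, and let $S'$ be the exchanged schedule. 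Two observations: only $j$, $i$, and the jobs of $X$ change completion time, and what follows them is untouched because the last of $j,X,i$ completes at the common instant $T:=t_j+p_i+p_j+P$ in both $S$ and $S'$; and the condition says exactly that the segment of start positions $[t_j,\,t_i-p_j]=[t_j,\,t_j+P]$ swept out by the exchange lies inside $[a,b]$.

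A tempting induction on $|X|$ — slide $j$ right past $X$, do the now-adjacent exchange (legal since $j$ then starts at $t_i-p_j\in[a,b]$), slide $i$ back left — does not close, because the sliding steps are exchanges of $i$ or $j$ with jobs of $X$, about which nothing is known. I would instead compute $\mathrm{cost}(S)-\mathrm{cost}(S')$ directly. Using $1/C=\int_C^{\infty}s^{-2}ds$, the objective is $\sum_k w_k/C_k=\int_0^{\infty}W(s)\,s^{-2}\,ds$ with $W(s)$ the weight completed by time $s$; since $S$ and $S'$ differ only on a window ending at $T$,
\[
\mathrm{cost}(S)-\mathrm{cost}(S')=\int_0^{\infty}\bigl(W^{S'}(s)-W^{S}(s)\bigr)\,s^{-2}\,ds .
\]
(Equivalently, expanding each affected job's contribution via $\tfrac{w_k}{s_1+p_k}-\tfrac{w_k}{s_2+p_k}=\int_{s_1}^{s_2}\tfrac{p_k\varphi_k(r)}{r+p_k}\,dr$ recasts this as a block of manifestly nonnegative terms — from $X$ lying on the same side of both $i$ and $j$ — offset against one sign-indefinite integral of $\varphi_i-\varphi_j$.) So the goal is to prove this integral strictly positive; note that $W^{S'}-W^{S}$ is not of one sign on the window, so one cannot argue pointwise.

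The crux, and the hard part, is exactly this sign change. What makes the integral come out positive is that $\varphi_i>\varphi_j$ throughout $[a,b]$ — not merely at $t_j$ — together with the fact that the whole window lies in $[a,b]$: this lets one pass from the pointwise margin to its integrated form and trade the early surplus of $W^{S'}-W^{S}$ against its later deficit, the affineness of $1/\varphi$ pinning the bookkeeping to the two endpoint inequalities $i\prec_{l(a)}j$ and $i\prec_{l(b)}j$, and the decreasing kernel $s^{-2}$ — which telescopes cleanly under these manipulations, unlike the kernels arising for other exponents — tipping the trade our way. Concretely I would single out the worst case, $p_i<p_j$ with $w_i<w_j$ and a long block $X$ (which maximises the deficit), and check it against those endpoint conditions. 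The main obstacle is this indefiniteness: one has to bound how much a long intervening block can hurt and show the local-dominance margin, integrated over $[a,b]$, always covers it — which is precisely where the exponent $-1$ is essential and where the interval statement fails for other powers.
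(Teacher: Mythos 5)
First, note that the paper does not prove Theorem~\ref{theorem: vas} at all: it is imported verbatim from V\'asquez's paper \cite{vasquez_for_2015} and used as a black box, so there is no in-paper proof to compare against. Your attempt therefore has to stand on its own as a proof of a known, nontrivial result.

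It does not. Your setup is sound: the equivalence of $i\prec_{l[a,b]}j$ with $\varphi_i>\varphi_j$ on $[a,b]$, the reduction of that hypothesis to the two endpoints via the affineness of $1/\varphi$, the localization of the cost change to the window $[t_j,T]$, and the integral representation $\sum_k w_k/C_k=\int_0^\infty W(s)s^{-2}\,ds$ are all correct. But the theorem's entire content is the inequality you then defer: that the non-adjacent exchange across an arbitrary block $X$ strictly improves the cost, given only the endpoint local-dominance conditions. At exactly that point the argument switches to the subjunctive --- ``I would instead compute \dots directly,'' ``Concretely I would single out the worst case \dots and check,'' ``one has to bound how much a long intervening block can hurt and show the local-dominance margin \dots always covers it.'' No such bound is exhibited, no worst case is actually checked, and the parenthetical claim that the decomposition yields ``a block of manifestly nonnegative terms'' offset against a single indefinite integral is not substantiated: writing the swap as $\Delta=\int_{t_j}^{t_j+p_j+P}\frac{w_i}{(r+p_i)^2}dr-\int_{t_j}^{t_j+p_i+P}\frac{w_j}{(r+p_j)^2}dr-\sum_{k\in X}\int_{s_k}^{s_k+p_i-p_j}\frac{w_k}{(r+p_k)^2}dr$, the first two integrals run over different intervals, the third term's sign depends on $\mathrm{sign}(p_i-p_j)$, and the kernels $w_k/(r+p_k)^2$ of the jobs in $X$ are not controlled by any hypothesis of the theorem. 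Bridging this is precisely the hard part of V\'asquez's published proof (a careful algebraic/case analysis specific to the cost $-1/C$), and your sketch replaces it with a description of what a proof would need to accomplish. As it stands, the proposal is a correct problem setup plus an accurate diagnosis of the difficulty, not a proof.
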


Based on Theorem \ref{theorem: vas} and Lemma \ref{lemma: local_new} we obtain the following dominance rule of airplane refueling problem.

\begin{corollary}
\label{rule: new_rule}
	For any two jobs $i, j$ with $w_i > w_j$:

	\begin{enumerate}
		\item If $\varphi_i(t) \geq \varphi_j(t)$ for $t \in [0, \infty)$, $i \prec_g j$;
		\item If $\varphi_j(t) \geq \varphi_i(t)$ for $t \in [0, \infty)$, $j \prec_g i$;
		\item else $\exists$ $t^*_{ij} = \frac{w_jp_i^2 - w_ip_j^2}{w_i p_j - w_j p_i} > 0$ with:
		\begin{itemize}
			\item $\varphi_i(t) > \varphi_j(t)$ for $t \in [0, t^*_{ij})$, $i \prec_{g[0, t^*_{ij})} j$;
			\item $\varphi_j(t) \geq \varphi_i(t)$ for $t \in [t^*_{ij}, \infty)$, $j \prec_{g[t^*_{ij}, \infty)}i$.
		\end{itemize}
	\end{enumerate}

\end{corollary}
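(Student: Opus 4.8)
The plan is to derive Corollary~\ref{rule: new_rule} by combining Lemma~\ref{lemma: local_new} with Theorem~\ref{theorem: vas}, so the whole argument reduces to an elementary analysis of the sign of $\varphi_i(t) - \varphi_j(t)$ as a function of $t \ge 0$. First I would write out the difference using the definition of the auxiliary function:
\[
\varphi_i(t) - \varphi_j(t) = \frac{w_i}{p_i(p_i+t)} - \frac{w_j}{p_j(p_j+t)}.
\]
Putting this over the common denominator $p_ip_j(p_i+t)(p_j+t)$, which is strictly positive for $t\ge 0$, the sign is governed by the numerator
\[
N(t) = w_i p_j (p_j+t) - w_j p_i (p_i+t) = (w_i p_j - w_j p_i)\, t + (w_i p_j^2 - w_j p_i^2),
\]
an affine function of $t$. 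So the sign behaviour of $\varphi_i - \varphi_j$ on $[0,\infty)$ is completely determined by the two coefficients $w_ip_j - w_jp_i$ (the slope) and $w_ip_j^2 - w_jp_i^2$ (the value at $t=0$).

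Next I would do the case split on these coefficients. If $N(t) \ge 0$ for all $t \ge 0$ — which happens exactly when both coefficients are nonnegative, or more carefully when the affine function never dips below zero on the ray — then $\varphi_i(t) \ge \varphi_j(t)$ throughout, so by Lemma~\ref{lemma: local_new} we have $i \prec_{l[0,\infty)} j$ (with the understanding that equality at isolated points is harmless for the interval statement), and Theorem~\ref{theorem: vas} upgrades this to $i \prec_g j$; this is case~(1). Case~(2) is symmetric. For case~(3), the affine function $N(t)$ changes sign exactly once on $(0,\infty)$; solving $N(t^*)=0$ gives
\[
t^*_{ij} = \frac{w_j p_i^2 - w_i p_j^2}{w_i p_j - w_j p_i},
\]
and I would check that under the hypotheses of case~(3) (together with $w_i > w_j$) this quantity is genuinely positive and that $N$ is positive on $[0, t^*_{ij})$ and nonpositive on $[t^*_{ij}, \infty)$ — i.e. $\varphi_i > \varphi_j$ then $\varphi_j \ge \varphi_i$. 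Applying Lemma~\ref{lemma: local_new} on each sub-ray and then Theorem~\ref{theorem: vas} with the appropriate interval yields $i \prec_{g[0, t^*_{ij})} j$ and $j \prec_{g[t^*_{ij}, \infty)} i$.

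The one place that needs genuine care — the main obstacle — is bookkeeping the relationship between the assumption $w_i > w_j$ and the signs of the two coefficients, and making sure the three cases in the statement are exhaustive and correctly matched to the coefficient signs. In particular I would verify that $w_i > w_j$ forces the "else" branch to be exactly the situation where the slope and intercept of $N$ have opposite signs (so the root $t^*_{ij}$ lands in $(0,\infty)$ rather than on the negative axis), and I would double-check the direction of the inequality defining $t^*_{ij}$ and the denominator's sign so that the formula as written is positive. The rest — clearing denominators, reading off an affine sign pattern, invoking the two cited results — is routine, so I would keep that part terse and spend the written proof mostly on the case analysis and the positivity of $t^*_{ij}$.
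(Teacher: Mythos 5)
Your overall strategy is the right one and is essentially the only proof available: reduce everything to the sign of the affine numerator $N(t) = (w_ip_j - w_jp_i)\,t + (w_ip_j^2 - w_jp_i^2)$, translate that sign into local dominance via Lemma~\ref{lemma: local_new}, and lift local to global dominance via Theorem~\ref{theorem: vas}. (The paper offers no written proof of this corollary, so there is nothing more specific to compare against.) The algebra, the formula for the root, and cases (1)--(2) are fine, modulo the minor point that Lemma~\ref{lemma: local_new} is an equivalence with \emph{strict} inequality while cases (1)--(2) permit equality; since $N$ is affine and not identically zero when $w_i > w_j$, equality occurs at most at one point and this is harmless.

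The genuine gap sits exactly in the step you deferred. You announce that you ``would verify that $w_i > w_j$ forces the else branch to be exactly the situation where the slope and intercept of $N$ have opposite signs'' with $N$ positive on $[0, t^*_{ij})$; carried out honestly, that verification fails. Under $w_i > w_j$ the sub-case ``slope $<0$, intercept $>0$'' is impossible: $w_ip_j < w_jp_i$ gives $1 < w_i/w_j < p_i/p_j$, while $w_ip_j^2 > w_jp_i^2$ gives $w_i/w_j > (p_i/p_j)^2 > p_i/p_j$, a contradiction. Hence in the crossing case one necessarily has $w_ip_j - w_jp_i > 0$ and $w_ip_j^2 - w_jp_i^2 < 0$, so $N<0$ on $[0, t^*_{ij})$ and $N>0$ afterwards, i.e.\ $j \prec_{g[0, t^*_{ij})} i$ and $i \prec_{g[t^*_{ij}, \infty)} j$ --- the bullets of case (3) with $i$ and $j$ interchanged. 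A concrete instance: $p_i=2$, $w_i=3$, $p_j=1$, $w_j=1$ has $w_i>w_j$ and $t^*_{ij}=1>0$, yet $\varphi_j(0)=1 > 3/4 = \varphi_i(0)$. The statement as printed is correct only if the hypothesis $w_i > w_j$ is read as $\varphi_i(0) > \varphi_j(0)$ (equivalently $w_ip_j^2 > w_jp_i^2$), which is the hypothesis the paper actually uses downstream in Lemmas~\ref{lemma: banned_interval} and~\ref{lemma: left_or_right} and in Figure~\ref{fig: auxilary_function} (whose plotted example has $w_j > w_i$). Your proof should therefore either adopt that corrected hypothesis or swap the roles of $i$ and $j$ in case (3); as currently planned, the promised sign check does not go through.
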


Actually, this rule is equivalent to the rule given in \cite{vasquez_for_2015},
whereas we use $\varphi$ 
to indicate the dominance relation between any jobs $i, j$.
Figure \ref{fig: auxilary_function} gives an illustrative example of the scenario where 
$t^*_{ij} = \frac{w_jp_i^2 - w_ip_j^2}{w_i p_j - w_j p_i} > 0$.

In this paper, we care about potential schedules with respect to Corollary \ref{rule: new_rule}.
We refer to these schedules as \emph{potential schedules} for short.

\begin{figure}[H]
\centering
	\begin{tikzpicture}[domain=-2:4,yscale=1,samples=200,>=latex,thick]
	  \draw[thick,->] (-0.5,0.5) -- (4,0.5) node[right] {$t$};
	  \draw[thick,->] (0,0) -- (0,4) node[below left] {$\varphi(t)$};
	  \draw (0,0.5) node[below left] {O};
	  \coordinate (O) at (0,0);

	  \draw[name path=i, domain=0:4, color=black] plot (\x, {(12)/(2*(2+ \x)}) node[right] {$y=\varphi_i(t)$};
	  \draw[name path=j, domain=0:4, color=black] plot (\x, {(162)/(9*(9 + \x)}) node[right] {$y=\varphi_j(t)$};

	  \draw [name intersections={of=i and j, by=x}, dashed, thin] (x) -- (1.5, 0.5) node[below] {$t^*_{ij}$};
	\end{tikzpicture}
\caption{Illustration of dominance property between job $i, j$. For $t \in [0, t^*_{ij})$, $\varphi_i(t) > \varphi_j(t)$, $i \prec_{g[0, t^*_{ij})} j$;
for $t \in [t^*_{ij}, \infty)$, $\varphi_i(t) \leq \varphi_j(t)$, $j \prec_{g[t^*_{ij}, \infty)}i$.}
\label{fig: auxilary_function}
\end{figure}
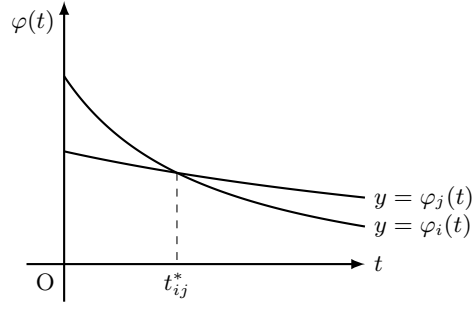


%
%

\section{Technical lemmas} 
\label{sec:technical_lemmas}

In this section, we establish several important properties concerning potential schedules.

\subsection{Relative order between two jobs} 
\label{sub:relative_order_between_two_jobs}
To begin with, 
we show that the dominance rule makes it impossible for a job to start within some time intervals in a potential schedule.
Let $T:= \sum_{j \in J} p_j$ be the total processing time, 
we have:

\begin{lemma}
\label{lemma: banned_interval}
	For two jobs $i, j \in J$ with $\varphi_i(0) > \varphi_j(0)$ and $t^*_{ij} \in (0, T)$,
	if in a complete schedule $S$ job $i$ starts in time interval $[t^*_{ij}, t^*_{ij} + p_j)$,
	then $S$ is not a potential schedule.
\end{lemma}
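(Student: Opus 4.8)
The plan is to assume, for contradiction, that $S$ is a potential schedule in which job $i$ starts at some time $t_i \in [t^*_{ij}, t^*_{ij} + p_j)$, and then to locate job $j$ in $S$ and derive a violation of the dominance rule from Corollary~\ref{rule: new_rule}. First I would observe that since $\varphi_i(0) > \varphi_j(0)$ and $t^*_{ij} \in (0,T)$, we are in case~3 of Corollary~\ref{rule: new_rule}: we have $i \prec_{g[0, t^*_{ij})} j$ and $j \prec_{g[t^*_{ij}, \infty)} i$. The key point is that the start time $t_i$ of job $i$ lies in the interval $[t^*_{ij}, t^*_{ij}+p_j)$, which is exactly a ``no man's land'' — too late for $i$ to legitimately precede $j$, but placing $j$ before $i$ forces $j$ to start early enough that it should precede $i$ by the other half of the rule.

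**The two cases.** I would split on whether $j$ precedes $i$ or $i$ precedes $j$ in $S$. If $i$ precedes $j$, then since $i \prec_{g[0,t^*_{ij})} j$ requires the swap to improve the cost whenever $i$'s start time is below $t^*_{ij}$, and here $t_i \ge t^*_{ij}$, this does not immediately bite; the real tension is: job $j$ starts at some time $t_j \ge t_i + p_i \ge t^*_{ij} + p_i > t^*_{ij}$, so by $j \prec_{g[t^*_{ij},\infty)} i$ (with the roles: $j$ starting at $t_j \ge t^*_{ij}$ and $i$ after it) we would need... — here I have to be careful about which job is ``earlier'' in the global-dominance definition. The cleaner case is $j$ before $i$: then $t_j \le t_i - p_j < (t^*_{ij} + p_j) - p_j = t^*_{ij}$, so $j$ starts strictly before $t^*_{ij}$ while $i$ starts at $t_i \ge t^*_{ij} \ge t_j + p_j$; this matches the hypothesis pattern $a \le t_j \le t_i - p_j \le b$ of global dominance with the interval $[0, t^*_{ij})$ on which $i \prec_g j$, so $S$ is not potential — contradiction. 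The other case ($i$ before $j$) is symmetric: $i$ starts at $t_i < t^*_{ij} + p_j$, and $j$ starts at $t_j \ge t_i + p_i$; I need $t_i \ge t^*_{ij}$, which holds, so $t_j \ge t^*_{ij}$ too, and the pair $(j,i)$ with $j$ starting before... wait, $i$ is before $j$ here, so the swap tested by $j \prec_{g[t^*_{ij},\infty)} i$ is precisely the swap of the adjacent-or-separated pair with $i$ in the earlier slot starting at $t_i \ge t^*_{ij}$; that swap strictly improves cost, so $S$ is not potential — contradiction again.

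**The main obstacle.** The delicate part will be bookkeeping the direction conventions in the definition of global dominance: the relation $i \prec_{g[a,b]} j$ is stated for schedules with $a \le t_j \le t_i - p_j \le b$, i.e., it constrains configurations where $j$ is scheduled \emph{before} $i$ (despite the notation $i \prec j$), saying such a configuration is suboptimal. So in the case ``$j$ before $i$'' I must verify $t_j$ and $t_i - p_j$ both lie in $[0, t^*_{ij})$: the lower bound is immediate, and the upper bound $t_i - p_j < t^*_{ij}$ is exactly the right endpoint of the banned interval. In the case ``$i$ before $j$'' I must instead invoke $j \prec_{g[t^*_{ij},\infty)} i$, checking that the configuration with $i$ before $j$ has $t_i \ge t^*_{ij}$ and $t_j - p_i \ge t^*_{ij}$ — the first is the left endpoint of the banned interval and the second follows since $t_j \ge t_i + p_i$. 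Once the interval membership checks are lined up correctly, each case is an immediate application of Corollary~\ref{rule: new_rule}, so the only real work is making sure $[t^*_{ij}, t^*_{ij}+p_j)$ is precisely the interval that is excluded by \emph{both} halves of case~3 no matter where $j$ sits relative to $i$.
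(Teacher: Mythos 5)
Your proposal is correct and follows essentially the same route as the paper's proof: case-split on whether $j$ precedes $i$ or vice versa, derive $0 \le t_j \le t_i - p_j < t^*_{ij}$ in the first case (contradicting $i \prec_{g[0,t^*_{ij})} j$) and $t^*_{ij} \le t_i \le t_j - p_i$ in the second (contradicting $j \prec_{g[t^*_{ij},\infty)} i$). Your care about the direction convention in the global-dominance definition is exactly the bookkeeping the paper's proof implicitly relies on.
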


\begin{proof}
	By Corollary \ref{rule: new_rule} it follows that $i \prec_{g[0, t^*_{ij})} j$ 
	and $j \prec_{g[t^*_{ij},\infty)} i$.

	If job $j$ precedes job $i$ in $S$, then $t_j < t_j + p_j \leq t_i$.
	Since $t_i \in [t^*_{ij}, t^*_{ij} + p_j)$,
	we have:

	\[
		0 \leq t_j \leq t_i - p_j < t^*_{ij},
	\]

	which leads to a violation of the relation $i \prec_{g[0, t^*_{ij})} j$ .

	Otherwise job $j$ is processed after job $i$ in $S$. 
	In a similar manner we can derive:

	\[
		t^*_{ij} \leq t_i \leq t_j - p_i < \infty.
	\]

	Again relation $j \prec_{g[t^*_{ij}, \infty)} i$ is not satisfied.
	\qed
\end{proof}

According to Lemma \ref{lemma: banned_interval}, 
job $i$ starts either in  $[0, t^*_{ij})$ or in $[t^*_{ij} + p_j, T - p_i)$ in a potential schedule. 
Next lemma shows that if we fix the start time of job $i$ to one of these intervals, 
the relative order of job $i, j$ in a potential schedule is already decided.

\begin{lemma}
\label{lemma: left_or_right}
	For two jobs $i, j \in J$ with $\varphi_i(0) > \varphi_j(0)$ and $t^*_{ij} \in (0, T)$, 
	suppose $S$ is a schedule with $t_j \not \in [t^*_{ij}, t^*_{ij} + p_j)$.

	\begin{enumerate}
		\item If $t_i < t^*_{ij}$, 
		job $i$ should be processed before job $j$,
		otherwise $S$ is not a potential schedule;
		\item If $t_i \geq t^*_{ij} + p_j$, 
		job $i$ should be processed after job $j$,
		otherwise $S$ is not a potential schedule.
	\end{enumerate}

\end{lemma}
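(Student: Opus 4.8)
The plan is to reuse the two global dominance relations coming from the third case of Corollary~\ref{rule: new_rule}. Since $\varphi_i(0) > \varphi_j(0)$ rules out the alternative ``$\varphi_j \ge \varphi_i$ everywhere'', and since $t^*_{ij}\in(0,T)\subseteq(0,\infty)$ is a genuine finite positive crossing point of the two auxiliary curves, we are precisely in case~3, so $i \prec_{g[0, t^*_{ij})} j$ and $j \prec_{g[t^*_{ij}, \infty)} i$. With these in hand, each part is proved by contradiction: assume the relative order of $i$ and $j$ in $S$ is the opposite of the claimed one, and show that the two start times then fall exactly into the range in which the relevant global dominance relation yields a strict improvement, so $S$ fails to be a potential schedule.

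For part~(1), suppose $t_i < t^*_{ij}$ but $j$ is processed before $i$ in $S$. Since there is no preemption, $j$ completes before $i$ begins, i.e. $t_j + p_j \le t_i$, hence $0 \le t_j \le t_i - p_j \le t_i < t^*_{ij}$. These are exactly the inequalities $0 \le t_j \le t_i - p_j < t^*_{ij}$ appearing in the definition of global dominance, so $i \prec_{g[0, t^*_{ij})} j$ applies and swapping $i,j$ strictly lowers the cost; thus $S$ is not a potential schedule, and in a potential schedule $i$ must precede $j$.

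For part~(2), suppose $t_i \ge t^*_{ij} + p_j$ but $i$ is processed before $j$ in $S$. Then $t_i + p_i \le t_j$, and using $p_j > 0$ we get $t^*_{ij} < t^*_{ij} + p_j \le t_i \le t_j - p_i < \infty$, which is the configuration $t^*_{ij} \le t_i \le t_j - p_i < \infty$ forbidden by $j \prec_{g[t^*_{ij}, \infty)} i$. Hence $S$ is not a potential schedule, and in a potential schedule $i$ must follow $j$.

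I do not expect a genuine obstacle here: the statement is essentially a bookkeeping consequence of Lemma~\ref{lemma: banned_interval} and Corollary~\ref{rule: new_rule}. The only point requiring a little care is translating ``$j$ precedes $i$ in $S$'' (respectively ``$i$ precedes $j$ in $S$'') into the chain of inequalities $a \le t_j \le t_i - p_j \le b$ demanded by the definition of global dominance, checking $t_j \ge 0$ holds automatically, and verifying that in each part the case hypothesis on $t_i$ pushes these inequalities strictly inside, respectively strictly to the right of, the crossing point $t^*_{ij}$.
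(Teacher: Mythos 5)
Your proof is correct and follows essentially the same route as the paper: invoke the two global dominance relations $i \prec_{g[0, t^*_{ij})} j$ and $j \prec_{g[t^*_{ij}, \infty)} i$ from Corollary~\ref{rule: new_rule}, then in each part derive the chain of inequalities $0 \le t_j \le t_i - p_j < t^*_{ij}$ (resp.\ $t^*_{ij} \le t_i \le t_j - p_i < \infty$) that puts the opposite ordering in violation. Your extra sentence justifying why the hypotheses place the pair in case~3 of the corollary is a small point of care the paper skips, but the argument is otherwise identical.
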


\begin{proof}
	According to Corollary \ref{rule: new_rule} we have
	$i \prec_{g[0, t^*_{ij})} j$ and $j \prec_{g[t^*_{ij}, \infty)} i$.

	For the first scenario,
	processing job $j$ before job $i$ in $S$ would imply $t_j + p_j < t_i$, 
	which leads to the following inequalities:

	\[
		0 \leq t_j \leq t_i - p_j < t^*_{ij}.
	\]

	Thus we have reached a negation to $i \prec_{g[0, t^*_{ij})} j$.

	Similarly, for the second scenario if job $i$ precedes job $j$ we will have:

	\[
		t^*_{ij} \leq t_i \leq t_j - p_i < T.
	\]

	Relation $j \prec_{g[t^*_{ij}, \infty)} i$ does not hold,
	which concludes the proof.
	\qed
\end{proof}

Conditioning on the start time of job $i$,
Lemma \ref{lemma: banned_interval} and Lemma \ref{lemma: left_or_right} provide a complete characterization of 
the relative order between jobs $i, j$ in a potential schedule.
See Figure \ref{fig: relative_order} for an overview of these relations.

\begin{figure}
\centering
	\begin{tikzpicture}[domain=-2:8,yscale=1,samples=200,>=latex,thick]
	  \draw[thick,-] (0,0) -- (8,0);

	  \draw (0, 0.05) -- (0, -0.05) node[below] {$0$};
	  \draw (3, 0.05) -- (3, -0.05) node[below] {$t^*_{ij}$};
	  \draw (5, 0.05) -- (5, -0.05) node[below] {$t^*_{ij} + p_j$};
	  \draw (8, 0.05) -- (8, -0.05) node[below] {$T$};

	  \draw[thick, decoration={brace, amplitude=5}, decorate] (0,0.1) -- (2.95,0.1);
	  \draw[thick, decoration={brace, amplitude=5}, decorate] (3.05,0.1) -- (4.95,0.1);
	  \draw[thick, decoration={brace, amplitude=5}, decorate] (5.05,0.1) -- (8,0.1);

	  \draw (1.5, 0.2) node[above] {$i$ before $j$};
	  \draw (4, 0.2) node[above] {banned interval};
	  \draw (6.5, 0.2) node[above] {$j$ before $i$}; 
	\end{tikzpicture}
\caption{The relative order of job $i$ and job $j$ in a potential schedule conditioned on the start time of job $i$ with $t^*_{ij} \in (0, T)$ and $\varphi_i(0) > \varphi_j(0)$.
When $t_i \in [0, t^*_{ij})$, job $i$ should precede job $j$;
when $t_i \in [t^*_{ij}, t^*_{ij} + p_j)$, the resulting schedule will violate dominance property;
when $t_i \in [t^*_{ij} + p_j, T)$, job $j$ should precede job $i$.}
\label{fig: relative_order}
\end{figure}
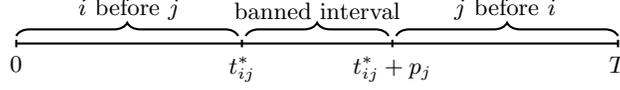

%

\subsection{Possible start times in a potential schedule} 
\label{sub:possible_start_times_in_a_potential_schedule}

In this subsection we consider a general \emph{sub-problem} scenario.
Let $J' \subseteq J$ be a set of jobs 
to be processed consecutively at start time $t_o$, where $t_o \in [0, T - \sum_{j \in J'} p_j]$.
For convenience we denote the completion time $t_o + \sum_{j \in J'} p_j$ by $t_e$.
Notice that when $J' = J$, we will have $t_o = 0$ by definition, the setting above describes the original problem.

We further denote the partial schedule of $J'$ by $S'$. 
If $S'$ does not violate any dominance rule,
we call it as a \emph{partial potential schedule}.

Now consider job $\alpha \in J'$ such that: 

\[
\alpha = \operatorname*{arg\,max}_i \varphi_i(t_o).
\]

When there are more than one job with the maximum $\varphi(t_o)$,
the tie breaking rule is to choose the job with the maximum $\varphi(t_e)$.

Our goal is to study the possible start times of job $\alpha$ in a partial potential schedule $S'$.
We start with analyzing 
the precedence relation of job $\alpha$ with other jobs in a partial potential schedule.
To that end, 
we divide time interval $[t_o, t_e]$ into consecutive subintervals 
with the time set:

\[
	C_{J'}^{\alpha} := \{c_{\alpha j} | c_{\alpha j} = t^*_{\alpha j} + p_j, c_{\alpha j} \in (t_o, t_e), j \in J'\setminus \{\alpha\} \} \cup \{t_o, t_e\}.
\]

We re-index the set $C_{J'}^{\alpha}$ according to their value rank,
that is, if a time point ranks the $q^{\text{\tiny th}}$ in the set, 
it will be denoted by $c_q$.
Besides, we use mapping $M_{J'}^{\alpha}: J' \mapsto \mathbb{Z^+}$ to maintain job information of the index.
The mapping is defined as: 

\[
	M_{J'}^{\alpha}(j) = \begin{cases}
		1, 													  & \text{if $j = \alpha$}; \\
		\text{the rank of $c_{\alpha j}$ in $C_{J'}^\alpha$}, & \text{if $t^*_{\alpha j} \in (t_o, t_e)$}; \\
		|J'| + 1,                                             & \text{if $t^*_{\alpha j} \notin (t_o, t_e)$}.

	\end{cases}
\]

We consider the start time of job $\alpha$ in each subinterval $[c_q, c_{q + 1})$.
Next lemma shows that once we fix $t_\alpha$ to a subinterval,
the positions of all remaining jobs in a potential schedule relative to $\alpha$ are determined.

\begin{lemma}
\label{lemma: J_l_J_r}
	Suppose $S'$ is partial potential schedule that job $\alpha$ starts within time interval $[c_q, c_{q + 1})$,
	then all $j$ with $M_{J'}^{\alpha}(j) \leq q$ should proceed $\alpha$, 
	and the rest jobs with $M_{J'}^{\alpha}(j) > q$ should come after $\alpha$.
\end{lemma}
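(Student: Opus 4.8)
The plan is to reduce Lemma~\ref{lemma: J_l_J_r} to the pairwise characterization already established in Lemma~\ref{lemma: banned_interval} and Lemma~\ref{lemma: left_or_right}, applied to the pair $(\alpha, j)$ for each $j \in J' \setminus \{\alpha\}$. The key observation is that, by the choice of $\alpha$ as $\operatorname*{arg\,max}_i \varphi_i(t_o)$, we have $\varphi_\alpha(t_o) \geq \varphi_j(t_o)$ for every other job $j$; I first need to argue that in the relevant regime this is actually the strict inequality $\varphi_\alpha(t_o) > \varphi_j(t_o)$ assumed in those lemmas (handling the tie case via the tie-breaking rule on $\varphi(t_e)$, or observing that when $\varphi_\alpha \equiv \varphi_j$ on the whole interval the relative order is forced and consistent with the claim). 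Note also that the start times here are measured from $t_o$, so strictly I am applying the earlier lemmas to the shifted sub-instance on $J'$ where the ``origin'' is $t_o$; since $t^*_{\alpha j}$ and the auxiliary functions depend only on $p,w$, the same threshold $t^*_{\alpha j}$ and the same quantity $c_{\alpha j} = t^*_{\alpha j} + p_j$ govern the pair, now viewed inside $[t_o, t_e]$.

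Next I would split on the value of $M_{J'}^{\alpha}(j)$, which encodes where $t^*_{\alpha j}$ (equivalently $c_{\alpha j}$) sits relative to the subinterval endpoints. Suppose job $\alpha$ starts in $[c_q, c_{q+1})$. For a job $j$ with $M_{J'}^{\alpha}(j) \le q$: either $t^*_{\alpha j} \notin (t_o, t_e)$ is impossible in this case (that would give $M_{J'}^{\alpha}(j) = |J'|+1 > q$), so $t^*_{\alpha j} \in (t_o, t_e)$ and $c_{\alpha j} = c_{M_{J'}^{\alpha}(j)} \le c_q \le t_\alpha$; in particular $t_\alpha \ge t^*_{\alpha j} + p_j$, and also $t_\alpha \notin [t^*_{\alpha j}, t^*_{\alpha j}+p_j)$, so Lemma~\ref{lemma: left_or_right}(2) forces $j$ to precede $\alpha$ in any partial potential schedule. (The boundary subcase $M_{J'}^{\alpha}(j) = 1$, i.e. $j$ tied at the max with $c_{\alpha j} = t_o$, can only occur if $t^*_{\alpha j} = t_o - p_j \le t_o$, which lands outside $(t_o,t_e)$ — so in fact $M_{J'}^{\alpha}(j) = 1$ only for $j = \alpha$, and this degenerate situation does not arise for $j \neq \alpha$; I should state this cleanly.) For a job $j$ with $M_{J'}^{\alpha}(j) > q$: if $t^*_{\alpha j} \notin (t_o, t_e)$ then, since $\varphi_\alpha(t_o) > \varphi_j(t_o)$ forces $\varphi_\alpha \ge \varphi_j$ on all of $[0,\infty)$ (by Corollary~\ref{rule: new_rule}, case 1 or case 3 with $t^*_{\alpha j} \ge t_e > t_\alpha$), we get $\alpha \prec_g j$, so $\alpha$ precedes $j$; if instead $t^*_{\alpha j} \in (t_o, t_e)$, then $c_{\alpha j} = c_{M_{J'}^{\alpha}(j)} \ge c_{q+1} > t_\alpha$, so $t_\alpha < t^*_{\alpha j} + p_j$, and since $S'$ is a partial potential schedule Lemma~\ref{lemma: banned_interval} rules out $t_\alpha \in [t^*_{\alpha j}, t^*_{\alpha j}+p_j)$, forcing $t_\alpha < t^*_{\alpha j}$; then Lemma~\ref{lemma: left_or_right}(1) forces $\alpha$ to precede $j$.

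Assembling the two cases yields exactly the statement: every $j$ with $M_{J'}^{\alpha}(j) \le q$ precedes $\alpha$ and every $j$ with $M_{J'}^{\alpha}(j) > q$ follows $\alpha$. The main obstacle I anticipate is bookkeeping at the interval boundaries and the tie-breaking case: I must be careful that the re-indexing of $C_{J'}^{\alpha}$ and the definition of $M_{J'}^{\alpha}$ line up so that ``$c_{\alpha j} \le c_q$'' is equivalent to ``$M_{J'}^{\alpha}(j) \le q$'', including when several $c_{\alpha j}$ coincide or coincide with $t_o$ or $t_e$, and that jobs with $\varphi_j(t_o) = \varphi_\alpha(t_o)$ but $\varphi_j(t_e) < \varphi_\alpha(t_e)$ are handled consistently (they should get $t^*_{\alpha j} \notin (t_o,t_e)$ and be placed after $\alpha$, matching $M_{J'}^{\alpha}(j) = |J'|+1 > q$). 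A secondary point worth a sentence is justifying the strict inequality $\varphi_\alpha(t_o) > \varphi_j(t_o)$ needed to invoke Lemmas~\ref{lemma: banned_interval} and~\ref{lemma: left_or_right}; if a genuine tie with $\varphi_\alpha \equiv \varphi_j$ everywhere occurs, both jobs are interchangeable and the conclusion holds vacuously, so I would dispatch that at the start.
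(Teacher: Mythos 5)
Your proposal is correct and follows essentially the same route as the paper's own proof: reduce to the pair $(\alpha, j)$, observe that either $\alpha$ dominates $j$ throughout $[t_o,t_e]$ (the $M_{J'}^{\alpha}(j)=|J'|+1$ case) or $t^*_{\alpha j}\in(t_o,t_e)$, and then case on $M_{J'}^{\alpha}(j)\le q$ versus $>q$, using Lemma~\ref{lemma: banned_interval} to exclude the banned interval and Lemma~\ref{lemma: left_or_right} to fix the order. Your extra attention to ties in $\varphi(t_o)$ and to the boundary bookkeeping of the re-indexed $C_{J'}^{\alpha}$ is more careful than the paper's writeup but does not change the argument.
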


\begin{proof}
	Suppose $j$ is an arbitrary job in the set $J'$ other than $\alpha$,
	by the definition of job $\alpha$ and Corollary \ref{rule: new_rule} we know either it is dominated by job $\alpha$ in time interval $[t_o, t_e]$,
	or there exists $t^*_{ij} \in (t_o, t_e)$ that 
	$\alpha \prec_{g[t_o, t^*_{ij})} j$ and $j \prec_{g[t^*_{ij}, t_e)} \alpha$.

	For the first case, we always have $M_{J'}^{\alpha}(j) = |J'| + 1 > q$, 
	job $j$ comes after job $\alpha$.

	While for the second case, 
	Lemma \ref{lemma: banned_interval} and Lemma \ref{lemma: left_or_right} apply, we have:
	\begin{itemize}
		\item If $M_{J'}^{\alpha}(j) \leq q$, 
			  it implies that $t_\alpha \geq t^*_{\alpha j} + p_j$. 
			  By Lemma \ref{lemma: left_or_right} job $j$ should precedes job $\alpha$ in a potential schedule.
		\item Otherwise $M_{J'}^{\alpha}(j) < q$, 
		      it follows that $t_\alpha < t^*_{\alpha j} + p_j$.
	          Since $S'$ is a partial potential schedule, 
	          Lemma \ref{lemma: banned_interval} rules out the possibility that $t_\alpha \in [t^*_{\alpha j}, t^*_{\alpha j} + p_j)$, 
	          we have $t_\alpha < t^*_{\alpha j}$.
	          Again by Lemma \ref{lemma: left_or_right} job $j$ should come after job $\alpha$.
	\end{itemize}
	\qed
\end{proof}

At last,
for the possible start times of job $\alpha$ in a partial potential schedule, we have:

\begin{lemma}
\label{lemma: at_most_start_times}
For every time interval $[c_q, c_{q + 1})$, there is at most one possible start time for job $\alpha$ in $[c_q, c_{q + 1})$. Thus, there are at most $|J'|$ possible start times of job $\alpha$ in a partial potential schedule $S'$.
\end{lemma}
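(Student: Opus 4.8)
The plan is to strengthen the per-interval claim: within each subinterval $[c_q, c_{q + 1})$ the start time of $\alpha$ in a partial potential schedule $S'$ is not merely confined to that subinterval, but is a single, uniquely determined number. The bound $|J'|$ then follows simply by counting subintervals.

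First I would apply Lemma \ref{lemma: J_l_J_r}. If $S'$ is a partial potential schedule in which $\alpha$ starts somewhere in $[c_q, c_{q + 1})$, then the set of jobs preceding $\alpha$ in $S'$ is exactly $J_l^{(q)} := \{\, j \in J' \setminus \{\alpha\} : M_{J'}^{\alpha}(j) \le q \,\}$, and the jobs following $\alpha$ form the complementary set within $J' \setminus \{\alpha\}$. The point to emphasize is that $J_l^{(q)}$ depends only on the index $q$, not on the particular partial potential schedule $S'$ we chose.

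Next, since the jobs of $J'$ are executed consecutively starting at $t_o$, the start time of $\alpha$ equals $t_o$ plus the total processing time of whatever is scheduled ahead of it. Combined with the previous step, this forces
\[
t_\alpha \;=\; t_o + \sum_{j \in J_l^{(q)}} p_j ,
\]
a value fixed by $q$ alone. Hence at most one start time of $\alpha$ is possible inside $[c_q, c_{q + 1})$, namely this one, and only when it actually lies in $[c_q, c_{q + 1})$.

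Finally I would count subintervals. The set $C_{J'}^{\alpha}$ consists of $t_o$, $t_e$, and at most $|J'| - 1$ points $c_{\alpha j}$ (one per job $j \in J' \setminus \{\alpha\}$, with fewer if some coincide or fall outside $(t_o, t_e)$), so $|C_{J'}^{\alpha}| \le |J'| + 1$ and $[t_o, t_e)$ is partitioned into at most $|J'|$ consecutive subintervals $[c_q, c_{q + 1})$. Each contributes at most one candidate start time, giving at most $|J'|$ possible start times for $\alpha$ in a partial potential schedule. There is no genuine obstacle beyond invoking Lemma \ref{lemma: J_l_J_r}; the only care needed is the bookkeeping when several $c_{\alpha j}$ coincide (which only decreases the number of subintervals) and the degenerate first interval $q = 1$, where $J_l^{(1)} = \emptyset$ and $t_\alpha = t_o$.
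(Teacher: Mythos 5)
Your proposal is correct and follows essentially the same route as the paper: fix the subinterval $[c_q, c_{q+1})$, use the determined partition into jobs before and after $\alpha$ to force $t_\alpha = t_o + \sum_{j \in J_l} p_j$, and conclude at most one candidate per subinterval. Your explicit count of the subintervals (at most $|J'|$, since $|C_{J'}^{\alpha}| \le |J'|+1$) and your citation of Lemma \ref{lemma: J_l_J_r} rather than the pairwise Lemma \ref{lemma: left_or_right} are, if anything, slightly tidier than the paper's own write-up.
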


\begin{proof}

	Suppose that $t_{\alpha} \in [c_q, c_{q + 1})$, according to Lemma \ref{lemma: left_or_right}, the positions of all remaining jobs in $J' \setminus \{\alpha\}$ relative to $\alpha$ are determined.
	We denote the set of jobs before and after job $\alpha$ as $J_l, J_r$, respectively.
	The start time of job $\alpha$ is determined by $J_l$. 
	More precisely, we have: $t_\alpha = t_o + \sum_{j \in J_l} p_j.$


	While if $t_o + \sum_{j \in J_l} p_j \notin [c_q, c_{q + 1})$,
	we have reached a conflict,
	which means there is no partial potential schedule that
	job $\alpha$ starts in current subinterval.
	As a result, there is at most one possible start time for job $\alpha$ in each subinterval.
	\qed 
\end{proof}


%
%

\section{Algorithm} 
\label{sec:algorithm}

In this section we devise an exact algorithm for airplane refueling problem 
and analyze its running time.

\subsection{An exact algorithm for air plane refueling problem} 
\label{sub:an_exact_algorithm_for_air_plane_refueling_problem}

We start with some notations.
For each job pair $i, j$ of Lemma \ref{lemma: banned_interval},
we name the time interval $b_{ij} = [t^*_{ij}, t^*_{ij} + p_j)$ as the \emph{banned interval} of job $i$ imposed by job $j$,
and denote $B_i := \cup_{j \in J \setminus {i}} b_{ij}$ as the union of all banned intervals of job $i$.

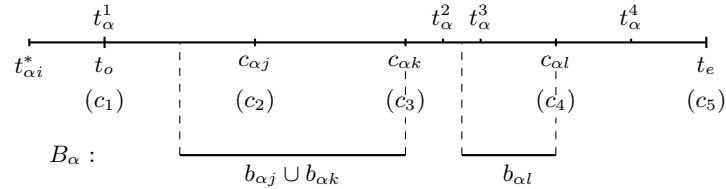
\begin{figure}[H]
\centering
	\begin{tikzpicture}[domain=-2:8,yscale=1,samples=200,>=latex,thick]
	  \draw[thick,-] (-1,0) -- (8,0); 

	  \draw (-1, 0.05) -- (-1, -0.05) node[below] {$t^*_{\alpha i}$};
	  \draw (0, 0.075) -- (0, -0.075) node[below] {$t_o$};
	  	\draw (0, -0.5) node [below] {$(c_1)$};
	  \draw (2, 0.05) -- (2, -0.05) node[below] {$c_{\alpha j}$};
	  	\draw (2, -0.5) node [below] {$(c_2)$};	  
	  \draw (4, 0.05) -- (4, -0.05) node[below] {$c_{\alpha k}$};
	  	\draw (4, -0.5) node [below] {$(c_3)$};	
	  \draw (6, 0.05) -- (6, -0.05) node[below] {$c_{\alpha l}$};
	  	\draw (6, -0.5) node [below] {$(c_4)$};	
	  \draw (8, 0.075) -- (8, -0.075) node[below] {$t_e$};
	  	\draw (8, -0.5) node [below] {$(c_5)$};	

	  \draw (0, 0.05) node [above] {$t^1_{\alpha}$};
	  \draw (4.5, 0) -- (4.5, 0.05) node [above] {$t^2_{\alpha}$}; 
	  \draw (5, 0) -- (5, 0.05) node [above] {$t^3_{\alpha}$};  
	  \draw (7, 0) -- (7, 0.05) node [above] {$t^4_{\alpha}$};

	  \draw (0, -1.5) node [left] {$B_\alpha :$};

	  \draw[thick, -] (1,-1.5) -- (4,-1.5); 
		  \draw[dashed, thin] (1, -1.5) -- (1, 0);
		  \draw[dashed, thin] (4, -1.5) -- (4, -1);
		  \draw[dashed, thin] (4, -0.7) -- (4, -0.35);

	  \draw[thick, -] (4.75, -1.5) -- (6, -1.5);
		  \draw[dashed, thin] (4.75, -1.5) -- (4.75, 0);
		  \draw[dashed, thin] (6, -1.5) -- (6, -1);
		  \draw[dashed, thin] (6, -0.7) -- (6, -0.35);	

	  \draw (2.5, -1.5)  node [below] {$b_{\alpha j} \cup b_{\alpha k}$};
	  \draw (5.5, -1.5)  node [below] {$b_{\alpha l}$};
	\end{tikzpicture}
\caption{Possible start times of job $\alpha$.}
\label{fig: pos_start_time}
\end{figure}

Given an instance of airplane refueling problem, 
we find the job $\alpha$ and try to start $\alpha$ in each interval induced by $C_{J'}^{\alpha}$.
For an interval $[c_q, c_{q+1})$ and corresponding $J_l, J_r$,
the possible start time of $\alpha$ will be $t_\alpha = \sum_{j \in J_l} p_j$.
If $t_\alpha \in [c_q, c_{q+1})$ and $t_\alpha \notin B_\alpha$, 
then we have found a potential start time of job $\alpha$.

Figure \ref{fig: pos_start_time} provides an illustrative example of the above procedure, 
where $J' = \{i, j, k, l, \alpha \}$.
As shown in the figure, 
the set $C^{\alpha}_{J'}$ divided time interval $[t_o, t_e)$ into 4 subintervals.
Job $i$ should always behind job $\alpha$ since $t^*_{\alpha i} < t_o$.
For other jobs in $J'$, 
condition on the start time of job $\alpha$ we have:

\begin{enumerate}
	\item $t_\alpha \in [c_1, c_2)$: Job $\alpha$ serves as the first job and $t_\alpha^1 = t_o$.
	\item $t_\alpha \in [c_2, c_3)$: Job $j$ precedes job $\alpha$ in a potential schedule.
	As $t_\alpha^2 = t_o + p_j > c_3$, job $\alpha$ can not start in this interval.
	\item $t_\alpha \in [c_3, c_4)$: Job $j, k$ precedes job $\alpha$ in a potential schedule.
	As $t_\alpha^3 = t_o + p_j + p_k$ is in banned intervals of job $\alpha$, 
	job $\alpha$ can not start in this interval.
	\item $t_\alpha \in [c_4, c_5)$: Job $j, k, l$ precedes job $\alpha$,
	$t_\alpha^4 = t_o + p_j + p_k + p_l$.
\end{enumerate}

Once we find a possible start time of job $\alpha$, 
we can solve $J_l$ and $J_r$ recursively with the procedure above 
until the subproblem has only one job.
See Algorithm \ref{algo: fast_schedule} for the pseudo code of our algorithm,
where the initial inputs are the complete job set $J$ and original start time $t=0$.

\begin{algorithm}[H]
	\caption{Fast Schedule}
	\label{algo: fast_schedule}
	\begin{algorithmic}[1]
	\Require{Job set $J$ and start time $t$.}
	\Ensure{The optimal schedule and the optimal cost.}
	\Function{FastSchedule}{$J, t$}
		\If{$J$ is empty}
			\State \Return{$0$, $[]$}
		\ElsIf{$J$ contains only one job $j$}
			\State \Return{$\frac{w_j}{p_j + t}$, $[j]$}
		\EndIf

		\State Find job $\alpha$ 


		\State $J_l, J_r, opt \gets [], J, 0$

		\For{$c_{q} \in C_J^\alpha$}
			\State Find job $i$ with $M_{J}^{\alpha}(i) = q$ \Comment{$i = \alpha$ for $q=1$, i.e. job $\alpha$ is the first job}
			\State $J_l \gets J_l \cup \{i\}$ 
			\State $J_r \gets J_r \setminus \{i\}$


			\If{$c_{q} \leq t + \sum_{i \in J_l} p_i  <{c}_{q+1} \And \ (t + \sum_{i \in J_l} p_i) \notin B_\alpha$} \label{line: new_branch} \Comment{new branch}

				\State $opt_l, seq_l \gets $ \Call{FastSchedule}{$J_l \setminus \{\alpha\}, t$}
				\State $opt_r, seq_r \gets $ \Call{FastSchedule}{$J_r, t + \sum_{i \in J_l} p_i + p_\alpha$}

				\If{$opt_l + opt_r + \frac{w_\alpha}{p_\alpha + \sum_{i \in J_l} p_i} > opt$}
					\State $opt, seq \gets (opt_l + opt_r + \frac{w_\alpha}{p_\alpha + \sum_{i \in J_l} p_i})$, $[seq_l, \alpha, seq_r]$  
				\EndIf
			\EndIf
		\EndFor

		\State \Return{opt, seq}
	\EndFunction
	\end{algorithmic}
\end{algorithm}

Next, we prove the correctness of Algorithm \ref{algo: fast_schedule}.

\begin{theorem}
	Algorithm \ref{algo: fast_schedule} returns the optimal solution of airplane refueling problem.
\end{theorem}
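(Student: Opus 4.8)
The plan is to prove a slightly stronger statement by strong induction on $|J'|$: for every $J' \subseteq J$ and every admissible offset $t_o \in [0,\, T - \sum_{j \in J'} p_j]$, the call \textsc{FastSchedule}$(J', t_o)$ returns a pair $(c^*, S^*)$ in which $S^*$ is an actual feasible schedule of the block (the jobs of $J'$ run consecutively starting at $t_o$), $c^*$ equals the cost $\sum_{j \in J'} w_j / C_j$ of $S^*$, and $c^*$ is the maximum such cost over all feasible schedules of the block. The theorem is the case $J' = J$, $t_o = 0$. The base cases $|J'| \le 1$ are read directly off the first two branches of the algorithm.

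For the inductive step, write $\mathrm{OPT}(J', t_o)$ for the block optimum and let $\alpha$ be the distinguished job. One inequality, $c^* \le \mathrm{OPT}(J', t_o)$, is the bookkeeping direction: in each iteration $q$ the algorithm enters, it fixes the start time $t_\alpha = t_o + \sum_{j \in J_l \setminus \{\alpha\}} p_j$ and recurses on the strictly smaller blocks $(J_l \setminus \{\alpha\},\, t_o)$ and $(J_r,\, t_\alpha + p_\alpha)$; by the induction hypothesis these return genuine schedules $S_l, S_r$ of their claimed costs, so $[S_l, \alpha, S_r]$ is a genuine block schedule whose cost is exactly the value recorded on line~\ref{line: new_branch}, and $c^*$, being the maximum of such recorded values, is realised by a real schedule.

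The reverse inequality $c^* \ge \mathrm{OPT}(J', t_o)$ is where the lemmas of Section~\ref{sec:technical_lemmas} enter. First one argues that an optimal block schedule $S$ may be taken to be a potential schedule: if some relation of Corollary~\ref{rule: new_rule} between two jobs of $J'$ were violated in $S$, the exchange witnessing that relation only rearranges jobs of $J'$ inside the block and leaves every completion time outside the swapped segment unchanged, so it strictly improves the cost, contradicting optimality; this is legitimate because the relations of Corollary~\ref{rule: new_rule} hold for all $t \ge 0$ and hence for the block's internal start times, which is the setting in which Lemmas~\ref{lemma: banned_interval}--\ref{lemma: at_most_start_times} are phrased. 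Given such an $S$, let $t_\alpha$ be the start time of $\alpha$ in $S$. By Lemma~\ref{lemma: banned_interval}, $t_\alpha \notin B_\alpha$, so $t_\alpha$ lies in a unique subinterval $[c_q, c_{q+1})$ of $C_{J'}^\alpha$; by Lemma~\ref{lemma: J_l_J_r} the jobs preceding $\alpha$ in $S$ are exactly those with $M_{J'}^\alpha(j) \le q$ and the ones following it are exactly those with $M_{J'}^\alpha(j) > q$, i.e.\ precisely the sets $J_l \setminus \{\alpha\}$ and $J_r$ the loop holds at iteration $q$, and moreover $t_\alpha = t_o + \sum_{j \in J_l \setminus \{\alpha\}} p_j$, so the guard on line~\ref{line: new_branch} is satisfied and the algorithm does enter this iteration. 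Splitting the cost of $S$ as $\mathrm{cost}(S|_{J_l \setminus \{\alpha\}}) + w_\alpha/(p_\alpha + t_\alpha) + \mathrm{cost}(S|_{J_r})$ and bounding each of the two restricted costs below by the optimum of its block, which by induction is what the recursive call returns, shows that $\mathrm{OPT}(J', t_o) = \mathrm{cost}(S)$ is among the values the algorithm maximises over, hence $c^* \ge \mathrm{OPT}(J', t_o)$.

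I expect the main obstacle to be the opening move of the reverse direction --- that an optimal schedule of an arbitrary block is automatically a potential schedule --- since this is exactly what makes Lemmas~\ref{lemma: banned_interval}, \ref{lemma: left_or_right} and~\ref{lemma: J_l_J_r} (all of which presuppose a partial potential schedule) available for pinning down the position of $\alpha$ and the induced $J_l / J_r$ split; carrying it out cleanly requires checking that the global-dominance exchanges remain confined to the block and that Corollary~\ref{rule: new_rule} may be invoked with the block's shifted start times. Everything else --- additivity of the cost across the split, and matching the algorithm's index $q$, its guard, and its accounting of $t_\alpha$ to the decomposition --- is routine once the four technical lemmas are in hand.
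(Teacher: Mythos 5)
Your proof is correct and follows essentially the same route as the paper's: both reduce correctness to the facts that every branch the algorithm prunes contains no potential schedule (via Lemmas~\ref{lemma: banned_interval}--\ref{lemma: at_most_start_times}) and that an optimal schedule is among the potential ones. The paper's own proof is a three-sentence sketch that leaves the latter fact and the recursive bookkeeping implicit, whereas your exchange argument (an optimal block schedule cannot violate Corollary~\ref{rule: new_rule}, since the witnessing swap stays inside the consecutive block and strictly improves the cost) and your strong induction over blocks supply exactly those omitted details.
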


\begin{proof}
	We only need to show that Algorithm \ref{algo: fast_schedule} does not eliminate any potential schedules.

	While locating possible start times of job $\alpha$,
	our algorithm drops out schedules with $t_\alpha$ in banned intervals or $t_\alpha$ exceeds current interval.
	According to Lemma \ref{lemma: banned_interval} and Lemma \ref{lemma: at_most_start_times}, 
	all the excluded schedules violate Corollary \ref{rule: new_rule}.

	Each recursive call also drops out schedules that have jobs in $J_l$ processed after job $\alpha$
	or jobs in $J_r$ processed before job $\alpha$.
	By Lemma \ref{lemma: left_or_right},
	these schedules are not potential schedules either.
	\qed
\end{proof}
%

\subsection{Running time of Algorithm \ref{algo: fast_schedule}} 
\label{sub:running_time_of_algorithm_algo: fast_schedule}

In this section we analyze the running time of Algorithm \ref{algo: fast_schedule} with respect to the number of potential schedules.
Our main result can be stated as follows:

\begin{theorem}
\label{theorem: num_sol}
	Algorithm \ref{algo: fast_schedule} finds the optimal solution of airplane refueling problem in 
	$\bigO(n^2(\log n + K))$ time, 
	where $K$ is the number of potential schedules with respect to Corollary \ref{rule: new_rule}.
\end{theorem}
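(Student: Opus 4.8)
The plan is to bound the total work by charging it to the recursion tree of \textsc{FastSchedule}. The key structural observation is that every leaf of this tree that corresponds to a successful branch (i.e.\ where line~\ref{line: new_branch} fires and both recursive calls eventually bottom out) produces a distinct potential schedule, and conversely every potential schedule is produced by exactly one root-to-leaf path. So I would first argue that the number of \emph{successful} branchings across the whole execution is at most $K$ (or more precisely, that the recursion tree has at most $K$ leaves once we prune branches that lead to no potential schedule). This uses the correctness theorem already proved, together with Lemma~\ref{lemma: J_l_J_r} and Lemma~\ref{lemma: at_most_start_times}: fixing $t_\alpha$ to a subinterval completely determines $J_l, J_r$ and the relative order of every other job w.r.t.\ $\alpha$, so the map from (successful) leaves to potential schedules is injective, and it is surjective because no potential schedule is ever discarded.

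Next I would bound the work done at a single invocation \textsc{FastSchedule}$(J', t)$ with $|J'| = m$, \emph{excluding} the recursive calls. Computing $\alpha = \arg\max_i \varphi_i(t)$ with the stated tie-break costs $\bigO(m)$; forming the set $C_{J'}^{\alpha}$ requires computing $t^*_{\alpha j}$ for each $j$ ($\bigO(m)$ arithmetic) and then sorting these $\bigO(m)$ values, which is $\bigO(m \log m)$; this sort is the source of the $\log n$ term. The loop over $c_q \in C_{J'}^{\alpha}$ runs $\bigO(m)$ iterations, and each iteration does $\bigO(m)$ work in the worst case to maintain $\sum_{i\in J_l}p_i$ and to test membership in $B_\alpha$ (the latter can be done in $\bigO(m)$ by scanning the $\bigO(m)$ banned subintervals $b_{\alpha j}$, or faster with preprocessing, but $\bigO(m)$ suffices). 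So a single invocation costs $\bigO(m^2)$ ignoring recursion, and $m \le n$ throughout, giving $\bigO(n^2)$ per node of the recursion tree (with an extra additive $\bigO(n\log n)$ from the sort, dominated by $\bigO(n^2)$ for the purposes of the per-node bound, though I would keep the $\log n$ explicit since on an unsuccessful-but-still-visited node the loop may short-circuit).

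The subtle point — and the main obstacle — is that the recursion tree is not binary with exactly $K$ leaves: an invocation may fire line~\ref{line: new_branch} for \emph{several} values of $q$, spawning many pairs of recursive calls, and some of those calls may lead to subtrees containing no potential schedule of the \emph{global} problem even though they look locally fine. I need to argue that the number of \emph{internal nodes} of the recursion tree is also $\bigO(nK)$ (or $\bigO(n^2 K)$ after multiplying by per-node cost), not just the number of leaves. The clean way is: (i) a call on a single-job set is $\bigO(1)$ and there are $\bigO(nK)$ such calls since each potential schedule, viewed as a sequence of $n$ jobs, touches at most $n$ base cases and the base cases along distinct paths are distinct; (ii) every internal node has at least one child that eventually reaches a base case contributing to some potential schedule, because line~\ref{line: new_branch} only fires when $t_\alpha$ is a genuine potential start time and, by Lemma~\ref{lemma: left_or_right} and Lemma~\ref{lemma: J_l_J_r}, the resulting $J_l, J_r$ subproblems are themselves non-degenerate and admit partial potential schedules whose concatenation with $\alpha$ extends to a potential schedule of $J'$. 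Hence each internal node can be charged to a distinct potential schedule passing through it, and since a fixed potential schedule passes through at most $\bigO(n)$ nodes of the tree (its root-to-leaf decomposition has depth $\le n$, but it may be "re-used" along several branches of the same node — here I'd bound by $\bigO(n)$ nodes per schedule), the total node count is $\bigO(nK)$. Combining with the $\bigO(n^2)$ per-node cost and folding in the one-time global sort cost gives the claimed $\bigO(n^2(\log n + K))$; the $\log n$ survives because even for $K = 1$ we pay a sort at the root and along the path.
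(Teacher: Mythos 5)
Your overall strategy is the same as the paper's: bound the number of recursion-tree nodes by charging each node to a potential schedule that passes through it (this is exactly what the paper's Lemma~\ref{lemma: least_one} delivers, via Lemma~\ref{lemma: greedy} and Lemma~\ref{lemma: independent}, and you correctly identify that the concatenation $[S_l,\alpha,S_r]$ must extend to a \emph{global} potential schedule), and your count of $\bigO(nK)$ nodes is right, since each potential schedule's decomposition tree has $\bigO(n)$ nodes and every node visited lies on at least one such tree.

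The gap is in the final cost accounting. You bound the non-recursive work at a node with $|J'|=m$ by $\bigO(m^2)$ (an $\bigO(m)$-iteration loop doing $\bigO(m)$ work per iteration, plus a per-node sort), and then multiply by $\bigO(nK)$ nodes --- but that product is $\bigO(n^3K)$, not the claimed $\bigO(n^2K)$; moreover, sorting \emph{at every node} would attach the $\log n$ factor multiplicatively to $K$ rather than additively as in the statement. To close the argument you must bring the per-node cost down to $\bigO(n)$: the paper does the $\bigO(n^2\log n)$ sorting of all the $t^*_{ij}$ (equivalently, of the $c_{ij}$ and of each merged banned set $B_i$) \emph{once} as preprocessing, after which $C_{J'}^{\alpha}$ and $M_{J'}^{\alpha}$ are extracted in $\bigO(n)$ per invocation, and the loop over the $c_q$ costs $\bigO(1)$ amortized per iteration because $\sum_{i\in J_l}p_i$ is maintained incrementally and is monotone increasing, so membership in $B_\alpha$ is tested with a single advancing pointer into the presorted merged intervals. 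With $\bigO(n)$ per node and $\bigO(nK)$ nodes, plus the one-time $\bigO(n^2\log n)$ preprocessing, you recover $\bigO(n^2(\log n + K))$. (Equivalently, the paper phrases this as: each of the $n$ start-time decisions of a potential schedule costs $\bigO(n)$, hence $\bigO(n^2)$ per schedule.) Your argument is salvageable, but as written the arithmetic does not yield the stated bound.
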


Before proving this theorem, we need to establish some properties of Algorithm \ref{algo: fast_schedule}.
We start by showing that for any job set $J'$ that starts at time $t$, 
there is at least one partial potential schedule.

\begin{lemma}
\label{lemma: greedy}
	Suppose $J' \subseteq J$ is a set of job that starts at time $t$,
	then there exists a procedure that can find one partial potential schedule for $J'$.
\end{lemma}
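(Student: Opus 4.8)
The plan is to exhibit an explicit greedy procedure and argue by induction on $|J'|$ that it produces a partial potential schedule. The natural candidate is the recursion already implicit in the paper: identify the job $\alpha \in J'$ maximizing $\varphi_i(t)$ (with the stated tie-break on $\varphi_i(t_e)$), and scan the subintervals of $[t_o, t_e)$ induced by $C_{J'}^\alpha$ from left to right. For each subinterval $[c_q, c_{q+1})$ we know by Lemma~\ref{lemma: J_l_J_r} which jobs $J_l$ must precede $\alpha$ and which jobs $J_r$ must follow; the forced start time of $\alpha$ is then $t_\alpha = t + \sum_{j \in J_l} p_j$, and this interval yields a placement for $\alpha$ exactly when $t_\alpha \in [c_q, c_{q+1})$ and $t_\alpha \notin B_\alpha$. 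The first step is therefore to show that \emph{some} subinterval works: as $q$ increases from $1$ to $|J'|$, the quantity $\sum_{j\in J_l} p_j$ grows monotonically from $0$ to $\sum_{j\in J'\setminus\{\alpha\}} p_j$, so $t_\alpha$ sweeps monotonically across $[t_o, t_e)$; a counting/pigeonhole argument (each $c_q$ is crossed, and the banned intervals $b_{\alpha j}$ are precisely the gaps that Lemma~\ref{lemma: banned_interval} forbids) shows that at least one admissible $q$ exists. Having fixed such a $q$, we recurse on $J_l\setminus\{\alpha\}$ starting at time $t$ and on $J_r$ starting at time $t_\alpha + p_\alpha$, both strictly smaller instances.

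The inductive step is then to verify that splicing the two recursively obtained partial potential schedules around $\alpha$ yields a partial potential schedule for $J'$. For this I would check every pair of jobs against Corollary~\ref{rule: new_rule}. Pairs lying entirely inside $J_l\setminus\{\alpha\}$ or entirely inside $J_r$ are handled by the inductive hypothesis (restricting a non-violating schedule to a sub-block, started at the correct time, preserves the dominance relations since those depend only on start times and the ambient processing times). Pairs of the form $(\alpha, j)$ are exactly what Lemma~\ref{lemma: J_l_J_r} guarantees: the relative order of $\alpha$ and $j$ matches the side of $t_\alpha$ relative to the critical time, so $r_{\alpha j}$ is respected. The remaining pairs $(j, k)$ with $j \in J_l$, $k \in J_r$: here $j$ precedes $k$ in the spliced schedule, and I need to argue this does not violate $r_{jk}$. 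The key observation is that $j$'s start time lies in $[t_o, t_\alpha)$ while $k$'s start time lies in $[t_\alpha + p_\alpha, t_e)$, and $\varphi_\alpha$ dominates both at $t_o$; combined with the monotonicity structure of the $\varphi$ functions and the definition of the critical times $t^*$, one shows the forced order $j$ before $k$ is consistent — essentially because $\alpha$ being "between" them in both position and in $\varphi$-value forces $j \prec k$ to already be the potential order.

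I expect the cross-block pairs $(j,k)$ with $j\in J_l, k\in J_r$ to be the main obstacle, since they are not directly covered by either the induction hypothesis or the lemmas about $\alpha$. The cleanest route is probably to reduce this case to the two-job lemmas: show that if $j$ must precede $\alpha$ (so $t_\alpha \ge t^*_{\alpha j} + p_j$, hence $j$'s region of the time axis is "on the left" of its critical configuration with $\alpha$) and $k$ must follow $\alpha$, then the transitivity-like interaction of the three critical times $t^*_{\alpha j}, t^*_{\alpha k}, t^*_{jk}$ forces $j \prec_{l} k$ at $j$'s actual start time — i.e., one must prove a three-job consistency statement for the airplane-refueling $\varphi$ functions. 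This is a finite calculation with the explicit formula $t^*_{ij} = \frac{w_j p_i^2 - w_i p_j^2}{w_i p_j - w_j p_i}$, but it is the one place where genuine work beyond bookkeeping is needed; everything else is induction plus direct appeal to Lemmas~\ref{lemma: banned_interval}, \ref{lemma: left_or_right}, \ref{lemma: J_l_J_r} and~\ref{lemma: at_most_start_times}.
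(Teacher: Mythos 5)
Your overall direction could be made to work, but as written it has two genuine gaps, and it is far heavier than necessary. First, the existence of an admissible subinterval is asserted via a ``counting/pigeonhole argument'' that is never actually given; the monotone sweep of $t_\alpha = t + \sum_{j \in J_l} p_j$ does not by itself rule out that every candidate value lands in $B_\alpha$ or outside its subinterval, so this needs a real argument. Second, and more seriously, you correctly identify the cross-block pairs $(j,k)$ with $j \in J_l$, $k \in J_r$ as the crux, but you only describe the three-job consistency statement that would be needed (``a finite calculation with the explicit formula for $t^*$'') without proving it. That statement is not bookkeeping: it is precisely the content of the paper's Lemma~\ref{lemma: independent}, which requires a separate multi-case analysis on the relative positions of $t^*_{\alpha j}$, $t^*_{\alpha k}$, $t^*_{jk}$. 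A proof that defers its hardest step to an unperformed computation is incomplete.

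The paper's own proof is entirely different and takes five lines: build the schedule greedily, always appending the unscheduled job with maximum $\varphi(C)$ where $C$ is the current completion time. If the result violated some dominance relation between a job $i$ scheduled before a job $j$, then $j \prec_{g}i$ on an interval containing $t_i$, hence $j \prec_{l(t_i)} i$, hence $\varphi_j(t_i) > \varphi_i(t_i)$ by Lemma~\ref{lemma: local_new} --- contradicting the greedy choice of $i$ at time $t_i$. Note that this greedy schedule is exactly the special case of your own scheme in which you always select $q=1$ (so $J_l = \emptyset$): that choice is always admissible (by maximality of $\varphi_\alpha(t_o)$, the time $t_o$ precedes every crossing point $t^*_{\alpha j}$ with $j \in J'$, so $t_o$ lies in no banned interval of $\alpha$ relative to $J'$), and it makes the cross-block case vacuous. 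Restricting to $q=1$ would therefore close both of your gaps at once and collapse your argument to the paper's.
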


\begin{proof}
	Consider the following procedure that construct the schedule successively.
	At each stage,
	choose the job with the maximum $\varphi(C)$ from the unscheduled jobs as the next job,
	where $C$ is the total processing time of the scheduled jobs plus $t$.
	We claim this procedure returns a partial potential schedule of $J'$.

	Let $S'$ be the schedule returned by the procedure above.
	Assume by contradiction that there are two jobs $i,j $ that violate dominance properties.
	Suppose $i$ precedes $j$ in $S'$, 
	then we have $j \prec_{g[t_i, t_j - p_i)} i$, 
	which implies $\varphi_j(t_i) > \varphi_i(t_i)$ by Lemma \ref{lemma: local_new}.
	In that case, we should have chosen job $j$ at time $t_i$ instead of $i$, 
	absurdity is obtained.
	\qed
\end{proof}


While Lemma \ref{lemma: greedy} concerns about a single job set,
the following lemma describes the relationship between two job sets $J_l$ and $J_r$.

\begin{lemma}
\label{lemma: independent}
	Given start time $t_{\alpha} \in [c_{q}, c_{q + 1})$ of job $\alpha$ 
	with $t_\alpha \in [t_o, t_e-p_\alpha]$ and $t_\alpha \notin B_\alpha$, 
	for any two jobs $j \in J_l$ and $k \in J_r$ with $M_{J'}^{\alpha}(j) \leq q < M_{J'}^{\alpha}(k)$,
	if job $j$ is scheduled before job $\alpha$ and job $k$ is scheduled after job $\alpha$,
	then no matter where job $j$ and job $k$ start, 
	dominance properties among $\alpha,j$ and $k$ will not be violated.
\end{lemma}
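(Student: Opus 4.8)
The plan is to argue that the three pairwise dominance relations among $\alpha$, $j$, and $k$ are all automatically satisfied by the sole facts that $j$ precedes $\alpha$, $k$ follows $\alpha$, and $t_\alpha \in [c_q, c_{q+1})$ with $t_\alpha \notin B_\alpha$ — regardless of where exactly $j$ and $k$ are placed on their respective sides. There are three pairs to check: $(\alpha, j)$, $(\alpha, k)$, and $(j, k)$.

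First I would dispose of the two pairs involving $\alpha$. For the pair $\alpha, j$ with $M_{J'}^{\alpha}(j) \le q$: by Lemma~\ref{lemma: J_l_J_r} the start time $t_\alpha \in [c_q, c_{q+1})$ already forces $j$ to precede $\alpha$ in any partial potential schedule, and conversely, by the reasoning inside the proof of Lemma~\ref{lemma: J_l_J_r} (via Lemma~\ref{lemma: banned_interval} and Lemma~\ref{lemma: left_or_right}), having $t_\alpha \ge t^*_{\alpha j} + p_j$ together with $j$ before $\alpha$ means the constraint $j \prec_{g[t^*_{\alpha j}, \infty)} \alpha$ is respected; crucially this conclusion uses only that $t_\alpha$ lies past the banned interval $b_{\alpha j}$, which is guaranteed since $t_\alpha \notin B_\alpha \supseteq b_{\alpha j}$ and $M_{J'}^{\alpha}(j) \le q$. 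The exact position of $j$ within $J_l$ is irrelevant because the global dominance relation $j \prec_{g[t^*_{\alpha j}, \infty)} \alpha$ depends only on $t_j$ and $t_\alpha$ satisfying $t^*_{\alpha j} \le t_j \le t_\alpha - p_j$, and $t_\alpha - p_j \ge t_o + \sum_{i \in J_l, i \ne \alpha} p_i \ge t_j$ holds trivially since $j \in J_l$. The pair $\alpha, k$ is symmetric: $M_{J'}^{\alpha}(k) > q$ means either $t^*_{\alpha k} \notin (t_o, t_e)$ (so $\alpha$ globally dominates $k$ on all of $[t_o,t_e)$ and the order $\alpha$ before $k$ is always fine), or $t^*_{\alpha k} > t_\alpha$, in which case $\alpha \prec_{g[0, t^*_{\alpha k})} \alpha$ — I mean $\alpha \prec_{g[0, t^*_{\alpha k})} k$ — is respected because $0 \le t_\alpha \le t_k - p_\alpha < t^*_{\alpha k}$; again, only the side of $k$ matters, not its precise slot.

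The one genuinely new case is the pair $j, k$ themselves, and I expect this to be the main obstacle, since $j$ and $k$ can in principle be placed anywhere on their two sides. The key observation I would exploit is that $\alpha$ sits strictly between $j$ and $k$, so $t_j < t_j + p_j \le t_\alpha \le t_\alpha + p_\alpha \le t_k$. Now suppose, toward a contradiction, that the pair $j,k$ violates its dominance relation; by Lemma~\ref{lemma: local_new} this means whichever of $j,k$ comes first is locally dominated at its start time by the other — concretely, since $j$ comes first, a violation requires $\varphi_k(t_j) > \varphi_j(t_j)$, i.e. $j$ is dominated by $k$ at time $t_j$ and hence (by Corollary~\ref{rule: new_rule}) $t_j \ge t^*_{jk}$, meaning $k \prec_{g[t^*_{jk}, \infty)} j$ is the operative relation and it is being violated because $t_j \le t_k - p_j$. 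I would then derive a contradiction with the definition of $\alpha$ as the $\varphi(t_o)$-maximizer: from $\varphi_k(t_j) > \varphi_j(t_j)$ and the fact that $\varphi_i(t)$ curves for distinct jobs cross at most once (the content of Corollary~\ref{rule: new_rule}), I would show $\varphi_k$ stays above $\varphi_j$ for all $t \ge t_j$, in particular at $t_\alpha$; but then combining with $\varphi_\alpha(t_o) \ge \varphi_k(t_o)$ and tracking the crossing points relative to $t^*_{\alpha k}$ and $t^*_{\alpha j}$ — using that $M_{J'}^{\alpha}(j) \le q < M_{J'}^{\alpha}(k)$ pins down $t^*_{\alpha j} \le t_\alpha < t^*_{\alpha k}$ (when these exist) — I would reach $\varphi_j(t_\alpha) \ge \varphi_\alpha(t_\alpha) > \varphi_\alpha(\text{something}) \ge \varphi_k(t_\alpha)$, contradicting $\varphi_k(t_\alpha) > \varphi_j(t_\alpha)$.

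The cleanest way to organize the last step is probably to prove the auxiliary claim that $\varphi_\alpha(t) \ge \varphi_j(t)$ for all $t \le t_\alpha$ and $\varphi_\alpha(t) \ge \varphi_k(t)$ for all $t \ge t_\alpha$ — i.e. $\alpha$ dominates $j$ to its left and $k$ to its right at every relevant time — which follows from the single-crossing structure together with $M_{J'}^{\alpha}(j) \le q < M_{J'}^{\alpha}(k)$ and $t_\alpha \in [c_q, c_{q+1})$. Given that claim, any order of $j$ relative to $\alpha$ with $j$ first and any order of $k$ relative to $\alpha$ with $k$ last transparently respect the $\alpha$-involving relations, and a $j,k$ violation would force $\varphi_j$ and $\varphi_k$ to cross each other on the "wrong" side of $t_\alpha$, which the claim forbids. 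I would present the three pairs as three short bullet-free paragraphs and keep the single-crossing bookkeeping as the only real computation.
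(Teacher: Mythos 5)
Your overall skeleton matches the paper's: the pairs $(\alpha,j)$ and $(\alpha,k)$ are quickly dispatched, and the real content is the pair $(j,k)$, which both you and the paper attack through the sandwich $\varphi_j(t_\alpha) > \varphi_\alpha(t_\alpha) > \varphi_k(t_\alpha)$ obtained from $t^*_{\alpha j} < t_\alpha < t^*_{\alpha k}$. However, the step you use to convert a hypothetical $(j,k)$-violation into a contradiction at $t_\alpha$ is not valid: from $\varphi_k(t_j) > \varphi_j(t_j)$ and the single-crossing property you conclude that $\varphi_k$ stays above $\varphi_j$ for all $t \ge t_j$. Single crossing does not determine the \emph{direction} of the crossing: it is entirely possible that $\varphi_k > \varphi_j$ on $[0, t^*_{jk})$ and $\varphi_j \ge \varphi_k$ afterwards, with $t_j < t^*_{jk} < t_\alpha$. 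In that configuration $\varphi_k(t_j) > \varphi_j(t_j)$ holds and yet $\varphi_j(t_\alpha) > \varphi_k(t_\alpha)$, so your intended contradiction never materializes; and the case is not vacuous, because the relation then at stake is $k \prec_{g[t_o, t^*_{jk})} j$, which is violated by ``$j$ before $k$'' precisely when $t_k - p_j < t^*_{jk}$ --- a condition on $t_k$, not on $t_j$, which your argument never examines. (Your concluding ``auxiliary claim'' that $\varphi_\alpha(t) \ge \varphi_j(t)$ for all $t \le t_\alpha$ is also false as stated: on $(t^*_{\alpha j}, t_\alpha]$ the inequality is reversed.)

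To close the gap you need the additional geometric fact that the paper establishes in its partial-dominance case: $\varphi_j$ and $\varphi_k$ cannot cross inside $(t^*_{\alpha j}, t^*_{\alpha k})$, because $\varphi_\alpha$ strictly separates them there; hence in the problematic configuration one has $t^*_{jk} \le t^*_{\alpha j}$, the equality case being excluded by $t_\alpha \notin B_\alpha$. Then $t_k - p_j > t_\alpha - p_j \ge t^*_{\alpha j} \ge t^*_{jk}$ defeats the violation condition $t_k - p_j < t^*_{jk}$. With that extra step inserted, your argument essentially reproduces the paper's proof; without it, the case $\varphi_k(t_o) > \varphi_j(t_o)$ with $p_\alpha < p_j$ is left open.
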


\begin{proof}
	The dominance relations between jobs $\alpha,j$ and jobs $\alpha,k$ are satisfied, 
	we only need to consider the relation between job $j$ and job $k$.
	From the definition of $c_q$, it follows that 

	\[
		t^*_{\alpha j} + p_j \leq t_{\alpha} < t^*_{\alpha k} + p_k.
	\]

	Since $t_\alpha$ is not in banned interval $b_{\alpha k} = [t^*_{\alpha k}, t^*_{\alpha k} + p_k)$, 
	the equation above leads to:

	\[
		t^*_{\alpha j} < t_\alpha < t^*_{\alpha k}.
	\]

	We distinguish the cases whether job $j$ and job $k$ have global precedence in time interval $[t_o, t_e]$.

	\begin{case}[Global dominance]\\
	Job $j$ and job $k$ have global precedence in time interval $[t_o, t_e]$.
		\begin{enumerate}
			\item $j \prec_{g[t_o, t_e]} k$: Since job $j$ is scheduled before job $k$, precedence rule is satisfied.
			\item $k \prec_{g[t_o, t_e]} j$: We will show this scenario is impossible. 
			At time $t_\alpha$, since $t^*_{\alpha j} < t_\alpha < t^*_{\alpha k}$ we have $j \prec_{l(t_\alpha)} \alpha$ and $\alpha \prec_{l(t_\alpha)} k$,
			which implies $j \prec_{l(t_\alpha)} k$.
			Thus we have constructed a contradiction to global dominance $j \prec_{g[t_o, t_e]} k$.
		\end{enumerate}
	\end{case}

	\begin{case}[Partial dominance] \\
	Job $j$ and job $k$ have no global precedence in time interval $[t_o, t_e]$. In this case there exists $t^*_{jk} \in (t_o, t_e)$.

		\begin{enumerate}

			\item $t^*_{jk} < t^*_{\alpha j}$: 
			On one hand by the definition of $\alpha$ we have $\varphi_k(t_o) > \varphi_j(t_o)$.
			On the other hand for the start time of job $k$, $t_k > t_\alpha > t^*_{\alpha j} + p_j > t^*_{jk} + p_j$.
			According to Lemma \ref{lemma: left_or_right}
			job $j$ precedes job $k$ does not violate the dominance properties between $i, j$.
			Figure \ref{fig: case_1} depicts the auxiliary functions of job $\alpha, j, k$ of this scenario.

			\item $t^*_{jk} = t^*_{\alpha j}$: 
			We claim this scenario is impossible.
			If $t^*_{jk} = t^*_{\alpha j}$ we will have $t^*_{jk} = t^*_{\alpha j} = t^*_{\alpha k}$.
			By the definition of $i, j$ this relation will lead to $t^*_{\alpha k} < t^*_{\alpha j} + p_j < t_{\alpha} < t^*_{\alpha k} + p_k$, that is, $t_\alpha \in [t^*_{\alpha k}, t^*_{\alpha k} + p_k)$.
			Since $t_\alpha \notin B_\alpha$ and $[t^*_{\alpha k}, t^*_{\alpha k} + p_k) \subseteq B_\alpha$,
			we have reached a contradiction. See Figure \ref{fig: case_3} for the auxiliary functions of job $\alpha, j, k$ of this scenario

			\item $t^*_{jk} > t^*_{\alpha j}$: At first we show $t^*_{jk} > t^*_{\alpha k}$.
			Assume by contradiction that $t^*_{\alpha j} < t^*_{jk} < t^*_{\alpha k}$,
			which would imply $\alpha \prec_{g(t^*_{\alpha j}, t_e)} j$ and $\alpha \prec_{g(t_o, t^*_{\alpha k})} k$. 
			Then for any time $t \in (t^*_{\alpha j}, t^*_{\alpha k})$ we have the relation $j \prec_{l(t^*_{\alpha j})} \alpha \prec_{l(t^*_{\alpha j})} k$.
			However, this is impossible since $t^*_{jk} \in (t^*_{\alpha j}, t^*_{\alpha k})$. 
			Therefore, we have $t^*_{jk} > t^*_{\alpha k}$,
			and job $j$ starts before time $t^*_{kj}$.
			In that case, having job $j$ precedes job $k$ does not violate the dominance properties between $i, j$,
			our claim holds.
			See Figure \ref{fig: case_2} for this scenario.
		\end{enumerate}		
	\end{case}
	\qed
\end{proof}

\begin{figure}

	\begin{minipage}[b]{.3\textwidth}
		\centering
			\begin{tikzpicture}[domain=-2:5,yscale=.7, xscale=.7, samples=200,>=latex,thick]
			  \draw[thick,->] (-0.5,0.5) -- (3.7,0.5) node[right] {\scriptsize $t$};
			  \draw[thick,->] (0,0) -- (0,4.7) node[left] {\scriptsize $\varphi(t)$};
			  \draw (0,0.5) node[below left] {\scriptsize $t_o$};
			  \coordinate (O) at (0,0);

			  \draw[name path=job_alpha, domain=0:3.7, color=black] plot (\x, {(3.95)/(0.98*(0.98 + \x)}); 
			  \draw (-0.1, 4) node[left] {\scriptsize $\varphi_\alpha(t)$};

			  \draw[name path=job_j, domain=-0:3.7, color=black] plot (\x, {(15)/(2.5*(2.5+ \x)});  
			  \draw (-0.1, 2.5) node[left] {\scriptsize $\varphi_k(t)$};

			  \draw[name path=job_k, domain=-0:3.7, color=black] plot (\x, {(160)/(9*(9 + \x)});
			  \draw (-0.1, 1.95) node[left] {\scriptsize $\varphi_j(t)$};

			  \draw [dashed, thin] (1.3714309278350516, 1.7141104155710498) -- (1.3714309278350516, 0.5) node[below] {\scriptsize $t^*_{\alpha j}$};

			  \draw [dashed, thin] (2.1308808290155445, 1.2956498388829214) -- (2.1308808290155445, 0.5) node[below] {\scriptsize $t^*_{\alpha k}$};

			  \draw [dashed, thin] (0.8113207547169812, 1.8119658119658117) -- (0.8113207547169812, 0.5) node[below] {\scriptsize $t^*_{j k}$};
			\end{tikzpicture}
		\caption{$t^*_{jk} < t^*_{\alpha j}$.}
		\label{fig: case_1}	
	\end{minipage}
	\begin{minipage}[b]{.3\textwidth}
		\centering
			\begin{tikzpicture}[domain=2:5,yscale=.7, xscale=.7, samples=200,>=latex,thick]
			  \draw[thick,->] (-0.5,0.5) -- (3.7,0.5) node[right] {\scriptsize \scriptsize$t$};
			  \draw[thick,->] (0,0) -- (0,4.7) node[left] {\scriptsize $\varphi(t)$};
			  \draw (0 ,0.5) node[below left] {\scriptsize $t_o$};
			  \coordinate (O) at (0,0);

			  \draw[name path=job_alpha, domain=0:3.7, color=black] plot (\x, {(3.95)/(0.98*(0.98 + \x)}); 
			  \draw (-0.1, 4) node[left] {\scriptsize $\varphi_\alpha(t)$};

			  \draw[name path=job_j, domain=0:3.7, color=black] plot (\x, {(15)/(2.5*(2.5+ \x) - 1.05});  
			  \draw (-0.1, 2.95) node[left] {\scriptsize $\varphi_k(t)$};

			  \draw[name path=job_k, domain=0:3.7, color=black] plot (\x, {(160)/(9*(9 + \x)});
			  \draw (-0.1, 2) node[left] {\scriptsize $\varphi_j(t)$};

			  \draw [dashed, thin] (1.3714309278350516, 1.7141104155710498) -- (1.3714309278350516, 0.5) node[below] {\scriptsize $t^*_{\alpha j}$};

			\end{tikzpicture}
		\caption{$t^*_{jk} = t^*_{\alpha j}$.}
		\label{fig: case_3}	
	\end{minipage}	
	\begin{minipage}[b]{.3\textwidth}
		\centering
			\begin{tikzpicture}[domain=-2:5,yscale=.7, xscale=.7, samples=200,>=latex,thick]
			  \draw[thick,->] (-0.5,0.5) -- (3.7,0.5) node[right] {\scriptsize$t$};
			  \draw[thick,->] (0,0) -- (0,4.7) node[left] {\scriptsize$\varphi(t)$};
			  \draw (0,0.5) node[below left] {\scriptsize$t_o$};
			  \coordinate (O) at (0,0);

			  \draw[name path=job_alpha, domain=0:3.7, color=black] plot (\x, {(5.5)/(1.2*(1.2 + \x)}); 
			  \draw (-0.1, 1.85) node[left] {\scriptsize$\varphi_k(t)$};

			  \draw[name path=job_j, domain=0:3.7, color=black] plot (\x, {(15)/(2.2*(2.2+ \x)});  
			  \draw (-0.1, 3.82) node[left] {\scriptsize$\varphi_\alpha(t)$};

			  \draw[name path=job_k, domain=0:3.7, color=black] plot (\x, {(150)/(9*(9 + \x)});
			  \draw (-0.2, 3.1) node[left] {\scriptsize$\varphi_j(t)$};

			  \draw [dashed, thin] (0.85084745762712, 2.234848484848484) -- (0.85084745762712, 0.5) node[below] {\scriptsize$t^*_{\alpha j}$};

			  \draw [dashed, thin] (1.7586206896551724, 1.5491452991452992) -- (1.7586206896551724, 0.5) node[below] {\scriptsize$t^*_{\alpha k}$};

			  \draw [dashed, thin] (2.507692307692307, 1.448306595365419) -- (2.507692307692307, 0.5) node[below] {\scriptsize$t^*_{j k}$};
			\end{tikzpicture}
		\caption{$t^*_{jk} > t^*_{\alpha j}$.}
		\label{fig: case_2}	
	\end{minipage}

\end{figure}

Combining Lemma \ref{lemma: greedy} and Lemma \ref{lemma: independent},
we can identify a strong connection between Algorithm \ref{algo: fast_schedule}
and potential schedules.

\begin{lemma}
\label{lemma: least_one}
	Whenever Algorithm \ref{algo: fast_schedule} adds a new branch for an instance $(J, t)$, 
	there is at least one potential schedule on that branch.
\end{lemma}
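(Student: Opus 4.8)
The plan is to show that when Algorithm~\ref{algo: fast_schedule} opens a new branch at line~\ref{line: new_branch}, the recursive structure it imposes can always be completed to a full potential schedule. Recall that opening a branch for instance $(J,t)$ means we have fixed $t_\alpha \in [c_q, c_{q+1})$ with $t_\alpha = t + \sum_{j \in J_l} p_j \notin B_\alpha$, and we have split $J$ into the set $J_l$ of jobs forced before $\alpha$ and $J_r$ of jobs forced after $\alpha$ (the split being exactly the one dictated by $M_{J'}^\alpha$ via Lemma~\ref{lemma: J_l_J_r}). So the schedule-so-far looks like $S_l \,\alpha\, S_r$ where $S_l$ is a permutation of $J_l \setminus \{\alpha\}$ starting at $t$ and $S_r$ is a permutation of $J_r$ starting at $t_\alpha + p_\alpha = t_e'$, say.

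First I would invoke Lemma~\ref{lemma: greedy} twice: once on the job set $J_l \setminus \{\alpha\}$ with start time $t$, producing a partial potential schedule $S_l$, and once on $J_r$ with start time $t_\alpha + p_\alpha$, producing a partial potential schedule $S_r$. This guarantees that all dominance relations internal to $J_l\setminus\{\alpha\}$ and internal to $J_r$ are respected. Next I would concatenate to form $S := S_l\,\alpha\,S_r$ and argue this whole schedule is potential. The relations to check fall into three groups: (i) pairs internal to $J_l\setminus\{\alpha\}$ or internal to $J_r$ — handled by the two applications of Lemma~\ref{lemma: greedy}; (ii) pairs involving $\alpha$ and some $j \in J_l$ or $k \in J_r$ — handled by Lemma~\ref{lemma: J_l_J_r} together with the fact that $t_\alpha \in [c_q,c_{q+1})$ and $t_\alpha \notin B_\alpha$, which is precisely the branch condition; and (iii) cross pairs $j \in J_l$, $k \in J_r$ — handled by Lemma~\ref{lemma: independent}, whose hypotheses ($t_\alpha \in [c_q,c_{q+1})$, $t_\alpha \in [t_o, t_e - p_\alpha]$, $t_\alpha \notin B_\alpha$, $M^\alpha_{J'}(j) \le q < M^\alpha_{J'}(k)$, $j$ before $\alpha$, $k$ after $\alpha$) are all met by construction. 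Since every pair of jobs in $S$ respects its dominance relation, $S$ is a potential schedule, and it lies on the branch just opened.

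One subtlety I would be careful about: Lemma~\ref{lemma: greedy} and Lemma~\ref{lemma: independent} are stated for the ``sub-problem'' setting with start time $t$ (playing the role of $t_o$) and completion time $t_e = t + \sum_{j} p_j$, so I need to note that the banned-interval and $c_q$ bookkeeping used inside those lemmas is consistent with the global $B_\alpha$ and $C_J^\alpha$ used in the algorithm — i.e. that restricting attention to the interval $[t, t_e]$ rather than $[0,\infty)$ does not change which relations must hold, which follows because dominance relations are monotone in $t$ (Corollary~\ref{rule: new_rule}) and the algorithm's branch condition already uses the full $B_\alpha$. The main obstacle, and the reason Lemma~\ref{lemma: independent} was proved separately with its three-case analysis, is group (iii): the cross pairs $j \in J_l$, $k \in J_r$ are not directly constrained by anything the algorithm checks, so without Lemma~\ref{lemma: independent} one could not rule out that greedily completing $J_l$ and $J_r$ independently introduces a $j,k$ violation. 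With that lemma in hand, though, the argument is just the assembly sketched above, so I would keep the proof short: apply Lemma~\ref{lemma: greedy} to each side, concatenate, and verify the three groups of pairs, citing Lemmas~\ref{lemma: J_l_J_r}, \ref{lemma: greedy}, and \ref{lemma: independent} respectively.
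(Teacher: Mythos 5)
Your proposal is correct and follows essentially the same route as the paper's proof: apply Lemma~\ref{lemma: greedy} to $J_l\setminus\{\alpha\}$ and to $J_r$ to get partial potential schedules, then use Lemma~\ref{lemma: independent} to certify the cross pairs in the concatenation $[S_l,\alpha,S_r]$. Your version is just slightly more explicit than the paper's in separating out the pairs involving $\alpha$ itself (which the paper folds into its appeal to Lemma~\ref{lemma: independent}).
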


\begin{proof}
	Suppose after we find a possible start time of job $\alpha$ 
	the remaining jobs are divided into $J_l$ and $J_r$.
	By Lemma \ref{lemma: greedy} both job sets have at least one partial potential schedule.
	Denote the partial schedule of $J_l$ by $S_l$ and the partial schedule of $J_r$ by $S_r$,
	then according to Lemma \ref{lemma: independent} the jointed schedule $[S_l, \alpha, S_r]$ is also a potential schedule.
	\qed
\end{proof} 

Now we are ready to prove Theorem \ref{theorem: num_sol}.

\begin{proof}[Proof of Theorem \ref{theorem: num_sol}]
	We only need to generate all $t^*_{ij}$ and sorted them once, 
	which takes $\bigO(n^2 \log n)$ time.
	While finding the possible start times of $\alpha$,
	with the already calculated $t^*_{ij}$,
	we can construct the set $C_J^\alpha$ and the mapping $M_{J}^{\alpha}$ in $\bigO(n)$ time.
	The iteration over each subinterval $[c_q, c_{q + 1})$ also needs at most $\bigO(n)$ time.
	Therefore, it takes at most $\bigO(n)$ time to find a potential start time of a job.

	By Lemma \ref{lemma: least_one} there exists at least one potential schedule with job $\alpha$ starting at that time. 
	For any potential schedule there are $n$ start times to be decided,
	as a consequence,
	Algorithm \ref{algo: fast_schedule} finds each potential schedule in at most $\bigO(n^2)$ time.
	To find all the potential schedules it will take $\bigO(n^2 (\log n + K))$ time.
	\qed
\end{proof}
%

%
%

\section{Experimental Study} 
\label{sec:experimental_study}

We code our algorithm with Python 3 and perform the experiment on a Linux machine with one Intel Core i7-9700K @ 3.6Ghz $\times$ 8 processor and 16Gb RAM.
Notice that our implementation only invokes one core at a time.

For experimental data set we adopt the method introduced by Höhn and Jacobs \cite{bader_experimental_2012} to generate random instances.
For an instance with $n$ jobs, 
the processing time $p_i$ of job $i$ is an integer generated from uniform distribution ranging from 1 to 100, 
whereas the priority weight $w_i = 2^{N(0, \sigma^2)} \cdot p_i$ with $N$ being normal distribution.
Therefore, a random instance is characterized by two parameters $(n, \sigma)$.
According to previous results \cite{bader_experimental_2012,vasquez_for_2015},
instances generated with smaller $\sigma$ are more likely to be harder to solve,
which means we can roughly tune the hardness of instances by adjusting the value $\sigma$.

At first we compare our algorithm with the $A^*$ algorithm given by Vásquez \cite{vasquez_for_2015}.
This algorithm modeles airplane refueling problem as a shortest path problem on a directed acyclic graph $G$.
The vertexes of $G$ consist of all subset of $J$ and arcs are linked according to Corollary \ref{rule: new_rule}，
and a path from vertex $\{\}$ to $J$ corresponds to a schedule.
Since there will be $2^n$ vertexes in $G$ and it takes $\bigO(n)$ time to process each vertex in the worst case, 
the complexity of this algorithm is $\bigO(n \cdot 2^n)$.
This part of experiment is conducted on data set $S_1$,
which is generated with $\sigma=0.1$ and job size ranges from $\{10, 20, \dots, 140\}$,
where for each configuration there are 50 instances.

Secondly, we evaluate the empirical performance of Algorithm \ref{algo: fast_schedule} 
on data set $S_2$.
In detail, for each job size $n$ in $\{100, 500, 1000, 2000, 3000 \}$ and for each $\sigma$ value $\{0.1, 0.101, 0.102, \dots, 1 \}$, there are 5 instances,
that is, $S_2$ has 4505 instances in total.

At last, we generate data set $S_3$ to examine the relations of instance hardness with the number of potential schedules and the value of $\sigma$.
This data set has 5 instances of 500 jobs for each $\sigma$ from $\{0.100, 0.101, 0.102, \dots, 1\}$.

\subsubsection{Comparison with $A^*$} 
\label{ssub:comparison_with_}
We consider the ratio between the average running time of 
$A^*$ and Algorithm \ref{algo: fast_schedule} on data set $S_1$.
As shown in Figure \ref{fig: compare},
our algorithm outperforms $A^*$ on all sizes
and the speed up is more significant on instances with larger size.

For instances with 140 jobs, our algorithm is over 100 times faster.

\begin{figure}[H]
	\centering
	\includegraphics[width=.6\textwidth]{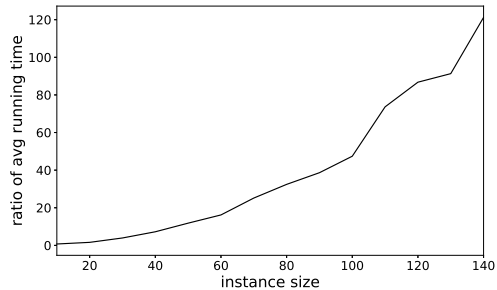}
	\caption{Speed up factor on dependence of instance size, on data set $S_1$.}
	\label{fig: compare}
\end{figure}

\subsubsection{Empirical performance} 
\label{ssub:empirical_performance}
Table \ref{tab: running_time} presents the running time of Algorithm \ref{algo: fast_schedule} on data set $S_2$.
We set a 10000 seconds timeout while performing the experiment.
When our algorithm does not solve all the instances within this time bound, 
we present the percentage of the solved instances.
For all instances with less than 2000 jobs 
as well as most instances with 2000 and 3000 jobs,
our algorithm returns the optimal solution within the time bound.
While for hard instances of 2000 and 3000 jobs generated with $\sigma$ less than $0.2$,
our algorithm can solve $92\%$ and $79\%$ of the instances within the time bound respectively.

\begin{table}[H]
\centering
\caption{Running time on data set $S_2$.}
\label{tab: running_time}
\scriptsize
\begin{tabularx}{\textwidth}{@{}p{.1\textwidth}<{\centering}YYYYYYYYYY@{}}
\toprule
\multirow{2}{*}{$|J|$} & \multicolumn{2}{c}{100}  & \multicolumn{2}{c}{500} 
                       & \multicolumn{2}{c}{1000} & \multicolumn{2}{c}{2000} 
                       & \multicolumn{2}{c}{3000} \\ 
				       \cmidrule(lr){2-3} \cmidrule(lr){4-5} \cmidrule(lr){6-7} \cmidrule(lr){8-9} \cmidrule(lr){10-11}        
                       & $\mathrm{avg.}$  & $\mathrm{std.}$  & $\mathrm{avg.}$  & $\mathrm{std.}$  & $\mathrm{avg.}$  & $\mathrm{std.}$        
                       & $\mathrm{avg.}$  & $\mathrm{std.}$  & $\mathrm{avg.}$  & $\mathrm{std.}$  \\ 
\midrule
$[0.1, 0.2)$  & 0.37   & 0.16   & 62.26   & 128    & 645.4   & 2223    & 92.40\% & -       & 79.60\%  & -      \\
$[0.2, 0.3)$  & 0.30   & 0.05   & 13.38   & 8.5    & 65.08   & 43.23   & 328.6   & 210.2   & 955.4    & 871.5  \\
$[0.3, 0.4)$  & 0.27   & 0.03   & 7.11    & 9.27   & 41.91   & 12.46   & 222.7   & 128.0   & 538.6    & 339.4  \\
$[0.4, 0.5)$  & 0.26   & 0.02   & 8.16    & 0.91   & 35.74   & 5.00    & 172.8   & 38.70   & 421.5    & 87.15  \\
$[0.5, 0.6)$  & 0.25   & 0.01   & 7.64    & 0.59   & 33.27   & 3.24    & 159.9   & 28.54   & 392.0    & 53.67  \\
$[0.6, 0.7)$  & 0.25   & 0.01   & 7.37    & 0.38   & 31.84   & 2.74    & 151.0   & 14.37   & 367.3    & 41.37  \\
$[0.7, 0.8)$  & 0.25   & 0.01   & 7.18    & 0.35   & 30.52   & 1.52    & 142.1   & 12.97   & 349.4    & 22.09  \\
$[0.8, 0.9)$  & 0.24   & 0.01   & 7.07    & 0.32   & 30.10   & 1.62    & 144.8   & 9.07    & 341.4    & 22.79  \\ 
$[0.9, 1.0]$  & 0.24   & 0.01   & 6.94    & 0.23   & 29.28   & 1.00    & 141.5   & 7.41    & 327.3    & 16.93  \\ 
\bottomrule
\end{tabularx}
\end{table}

\subsubsection{Instance hardness} 
\label{ssub:instance_hardness}
Figure \ref{fig: hardness} depicts relations of different hardness indicators.
The chart on the left shows the number of potential schedules and 
the running time of Algorithm \ref{algo: fast_schedule} on data set $S_3$.
Since both variables cover a large range,
we present the figure as a log-log plot.
As indicated by Theorem \ref{theorem: num_sol},
there is a strong correlation between the number of potential schedules and the running time of Algorithm \ref{algo: fast_schedule}.
Therefore, the number of potential schedules can serve as a rough measure of the instance hardness.
For the relation between the number of potential schedules and the $\sigma$ value, 
the second chart exhibits that 
instances generated with small $\sigma$ are more likely to have a large number of potential schedules.
The relation above is more obvious on smaller $\sigma$,
while on larger $\sigma$ the difference between the number of potential schedules is less significant.

\begin{figure}[H]

	\begin{minipage}[b]{.5\textwidth}
		\centering
		\includegraphics[width=\textwidth]{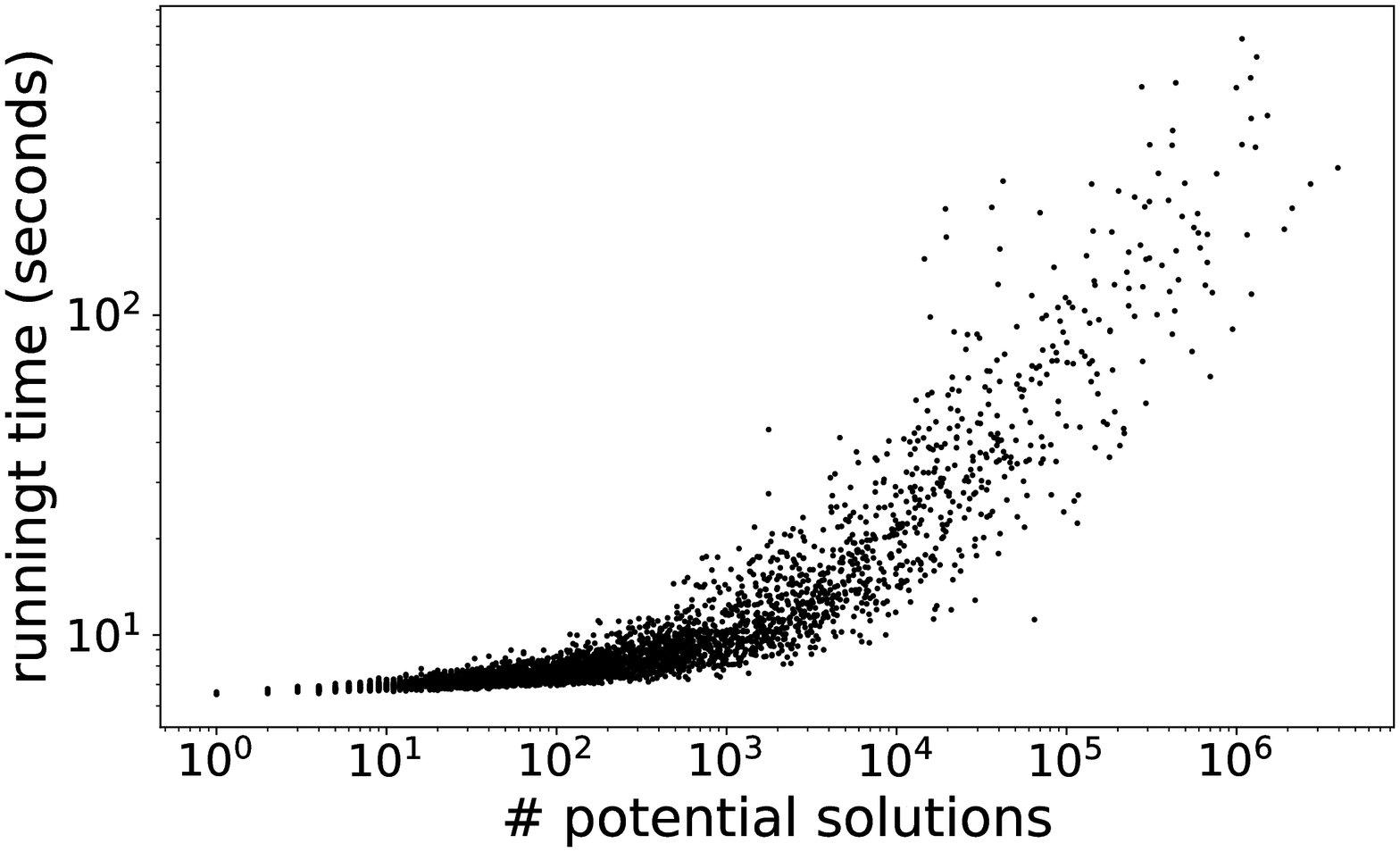}
	\end{minipage}
	\begin{minipage}[b]{.5\textwidth}
		\centering
		\includegraphics[width=\textwidth]{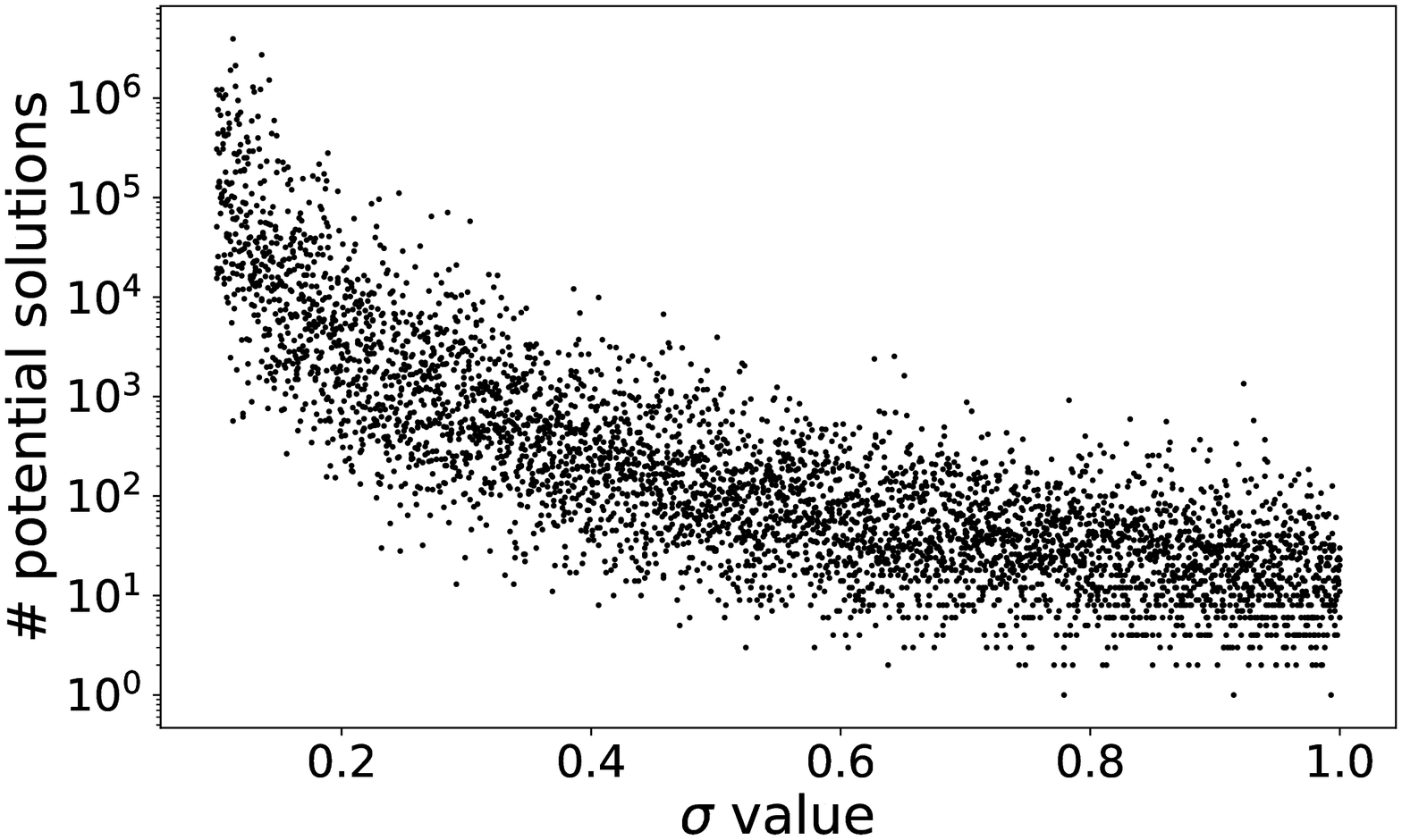}
	\end{minipage}
	\caption{Different indicators of instance hardness, on data set $S_3$.}
	\label{fig: hardness}
\end{figure}

\section{Conclusions} 
\label{sec:conclusions}
We devise an efficient exact algorithm for airplane refueling problem.
Based on the dominance properties of the problem, 
we propose a method that can prefix some jobs' start times and determine the relative orders among jobs in a potential schedule.
This technique enables us to solve airplane refueling problem in a recursive manner.
Our algorithm outperforms the state of the art exact algorithm on random generated data sets.
For large instances with hard configurations that can not be tackled by previous algorithms,
our algorithm can solve most of them in a reasonable time.

The empirical efficiency of our algorithm can be attributed to two factors.
First, 
our algorithm explores only the branches that contain potential schedules.
Second,
on the root node of each branch,
the problem is further divided into smaller subproblems,
which can also speedup the searching process.
Theoretically, 
we prove that the running time of our algorithm is upper bounded by the number of potential schedules
times a polynomial overhead in the worst case.

Another contribution of this work is that we give some new structural properties of airplane refueling problem, 
which may be helpful in understanding the computational complexity of the problem.

%
%

\bibliography{NVehicle}

\end{document}